\providecommand{\tabularnewline}{\\}
\newtheorem{lemma}{Lemma}
\newtheorem{remrk}{Remark}
\newtheorem{thm}{Theorem}
\newtheorem{cor}{Corollary}
\begin{document}

\title{On the Asymptotic Connectivity of Random Networks under the Random
Connection Model}

\author{\begin{tabular}{cc}
Guoqiang Mao & Brian DO Anderson\tabularnewline
School of Electrical and Information Engineering & Research School of Information Sciences and Engineering\tabularnewline
The University of Sydney & Australian National University\tabularnewline
National ICT Australia & National ICT Australia\tabularnewline
Email: guoqiang@ee.usyd.edu.au & Email: brian.anderson@anu.edu.au\tabularnewline
\end{tabular}\emph{}%
\thanks{This research is funded by ARC Discovery project: DP0877562.%
}}
\maketitle
\begin{abstract}
Consider a network where all nodes are distributed on a unit square
following a Poisson distribution with known density $\rho$ and a
pair of nodes separated by an Euclidean distance $x$ are directly
connected with probability $g\left(\frac{x}{r_{\rho}}\right)$, where
$g:[0,\infty)\rightarrow[0,1]$ satisfies three conditions: rotational
invariance, non-increasing monotonicity and integral boundedness,
$r_{\rho}=\sqrt{\frac{\log\rho+b}{C\rho}}$ , $C=\int_{\Re^{2}}g\left(\left\Vert \boldsymbol{x}\right\Vert \right)d\boldsymbol{x}$
and $b$ is a constant, independent of the event that another pair
of nodes are directly connected. In this paper, we analyze the asymptotic
distribution of the number of isolated nodes in the above network
using the Chen-Stein technique and the impact of the boundary effect
on the number of isolated nodes as $\rho\rightarrow\infty$. On that
basis we derive a necessary condition for the above network to be
asymptotically almost surely connected. These results form an important
link in expanding recent results on the connectivity of the random
geometric graphs from the commonly used unit disk model to the more
generic and more practical random connection model.\end{abstract}
\begin{keywords}
Isolated nodes, connectivity, random connection model
\end{keywords}
\thispagestyle{empty}
\pagestyle{empty}

\section{Introduction\label{sec:Introduction}}

Connectivity is one of the most fundamental properties of wireless
multi-hop networks \cite{Gupta98Critical,Xue04The,Bettstetter04On,Bettstetter02On,Hekmat06Connectivity}.
A network is said to be \emph{connected} if there is a path between
any pair of nodes. In this paper we consider the necessary condition
for an \emph{asymptotically almost surely} (a.a.s.) connected network
in $\Re^{2}$. Specifically, we investigate a network where all nodes
are distributed on a unit square $\left[-\frac{1}{2},\frac{1}{2}\right)^{2}$
following a Poisson distribution with known density $\rho$ and a
pair of nodes separated by an Euclidean distance $x$ are directly
connected with probability $g\left(\frac{x}{r_{\rho}}\right)$, independent
of the event that another pair of nodes are directly connected. Here
$g:[0,\infty)\rightarrow[0,1]$ satisfies the properties of rotational
invariance, non-increasing monotonicity and integral boundedness \cite[Chapter 6]{Franceschetti07Random,Meester96Continuum}
\footnote{Throughout this paper, we use the non-bold symbol, e.g. $x$, to denote
a scalar and the bold symbol, e.g. $\boldsymbol{x}$, to denote a
vector.%
}%
\footnote{We refer readers to \cite[Chapter 6]{Franceschetti07Random,Meester96Continuum}
for detailed discussions on the random connection model.%
}:\begin{equation}
\left\{ \begin{array}{lcl}
g\left(x\right)\leq g\left(y\right) &  & whenever\;\; x\geq y\\
0<\int_{\Re^{2}}g\left(\left\Vert \boldsymbol{x}\right\Vert \right)d\boldsymbol{x}<\infty\end{array}\right.\label{eq:conditions on g(x)}\end{equation}
where $r_{\rho}=\sqrt{\frac{\log\rho+b}{C\rho}}$ , $0<C=\int_{\Re^{2}}g\left(\left\Vert \boldsymbol{x}\right\Vert \right)d\boldsymbol{x}<\infty$,
$b$ is a constant and $\left\Vert \bullet\right\Vert $ denotes the
Euclidean norm. 

It is shown later in Section \ref{sub:The-number-of-isolated-nodes-torus}
that the conditions in \eqref{eq:conditions on g(x)} imply $g\left(x\right)=o_{x}\left(\frac{1}{x^{2}}\right)$
where the symbol $o_{x}$ is defined shortly later. In this paper
we further require $g$ to satisfy a slightly more restrictive condition
that $g\left(x\right)=o_{x}\left(\frac{1}{x^{2}\log^{2}x}\right)$
and the implications of such more restrictive condition become clear
in the analysis of Section \ref{sub:The-number-of-isolated-nodes-torus},
particularly in Remark \ref{rem:the need for tighter condition on g}.
The condition $g\left(x\right)=o_{x}\left(\frac{1}{x^{2}\log^{2}x}\right)$
is only slightly more restrictive than the condition $g\left(x\right)=o_{x}\left(\frac{1}{x^{2}}\right)$
in that for an arbitrarily small positive constant $\varepsilon$,
$\frac{1}{x^{2+\varepsilon}}=o_{x}\left(\frac{1}{x^{2}\log^{2}x}\right)$.

The reason for choosing this particular form of $r_{\rho}$ is that
the analysis becomes nontrivial when $b$ is a constant. Other forms
of $r_{\rho}$ can be accommodated by allowing $b\rightarrow\infty$
or $b\rightarrow-\infty$, e.g. $b$ becomes a function of $\rho$,
as $\rho\rightarrow\infty$. We discuss these situations separately
in Section \ref{sec:The-Necessary-Condition}. 

Denote the above network by $\mathcal{G}\left(\mathcal{X}_{\rho},g_{\rho}\right)$.
It is obvious that under a \emph{unit disk model} where $g(x)=1$
for $x\leq1$ and $g(x)=0$ for $x>1$, $r_{\rho}$ corresponds to
the transmission range for connectivity \cite{Gupta98Critical}. Thus
the above model easily incorporates the unit disk model as a special
case. A similar conclusion can also be drawn for the log-normal connection
model.

The following notations and definitions are used:
\begin{itemize}
\item $f\left(z\right)=o_{z}\left(h\left(z\right)\right)$ iff (if and only
if) $\lim_{z\rightarrow\infty}\frac{f\left(z\right)}{h\left(z\right)}=0$;
\item $f\left(z\right)\sim_{z}h\left(z\right)$ iff $\lim_{z\rightarrow\infty}\frac{f\left(z\right)}{h\left(z\right)}=1$;
\item An event $\xi_{z}$ depending on $z$ is said to occur a.a.s. if its
probability tends to one as $z\rightarrow\infty$.
\end{itemize}
The above definition applies whether the argument $z$ is continuous
or discrete, e.g. assuming integer values.

The contributions of this paper are: firstly using the Chen-Stein
technique \cite{Arratia90Poisson,Barbour03Poisson}, we show that
the distribution of the number of isolated nodes in $\mathcal{G}\left(\mathcal{X}_{\rho},g_{\rho}\right)$
asymptotically converges to a Poisson distribution with mean $e^{-b}$
as $\rho\rightarrow\infty$; secondly we show that the number of isolated
nodes due to the boundary effect in $\mathcal{G}\left(\mathcal{X}_{\rho},g_{\rho}\right)$
is \emph{a.a.s.} zero, i.e. the boundary effect has asymptotically
vanishing impact on the number of isolated nodes; finally we derive
the necessary condition for $\mathcal{G}\left(\mathcal{X}_{\rho},g_{\rho}\right)$
to be a.a.s. connected as $\rho\rightarrow\infty$ under a generic
connection model, which includes the widely used unit disk model and
log-normal connection model as its two special examples. 

The rest of the paper is organized as follows: Section \ref{sec:Isolated nodes torus}
analyzes the distribution of the number of isolated nodes on a torus;
Section \ref{sec:The-Impact-of-boundary effect} evaluates the impact
of the boundary effect on the number of isolated nodes; Section \ref{sec:The-Necessary-Condition}
provides the necessary condition for $\mathcal{G}\left(\mathcal{X}_{\rho},g_{\rho}\right)$
to be a.a.s connected; Section \ref{sec:Related-Work} reviews related
work in the area. Discussions on the results and future work suggestions
appear Section \ref{sec:Conclusions-and-Further}.

\section{The Distribution of The Number Of Isolated Nodes on A Torus \label{sec:Isolated nodes torus}}

Denote by $\mathcal{G}^{T}\left(\mathcal{X}_{\rho},g_{\rho}\right)$
a network with the same node distribution and connection model as
$\mathcal{G}\left(\mathcal{X}_{\rho},g_{\rho}\right)$ except that
nodes in $\mathcal{G}^{T}\left(\mathcal{X}_{\rho},g_{\rho}\right)$
are distributed on a unit torus $\left[-\frac{1}{2},\frac{1}{2}\right)^{2}$.
In this section, we analyze the distribution of the number of isolated
nodes in $\mathcal{G}^{T}\left(\mathcal{X}_{\rho},g_{\rho}\right)$.
With minor abuse of the terminology, we use $A^{T}$ to denote both
the unit torus itself and the area of the unit torus, and in the latter
case, $A^{T}=1$.

\subsection{Difference between a torus and a square}

The unit torus $\left[-\frac{1}{2},\frac{1}{2}\right)^{2}$ that is
commonly used in random geometric graph theory is essentially the
same as a unit square $\left[-\frac{1}{2},\frac{1}{2}\right)^{2}$
except that the distance between two points on a torus is defined
by their\emph{ toroidal distance}, instead of Euclidean distance.
Thus a pair of nodes in $\mathcal{G}^{T}\left(\mathcal{X}_{\rho},g_{\rho}\right)$,
located at $\boldsymbol{x}_{1}$ and $\boldsymbol{x}_{2}$ respectively,
are directly connected with probability $g\left(\frac{\left\Vert \boldsymbol{x}_{1}-\boldsymbol{x}_{2}\right\Vert ^{T}}{r_{\rho}}\right)$
where $\left\Vert \boldsymbol{x}_{1}-\boldsymbol{x}_{2}\right\Vert ^{T}$
denotes the \emph{toroidal distance} between the two nodes. For a
unit torus $A^{T}=\left[-\frac{1}{2},\frac{1}{2}\right)^{2}$, the
toroidal distance is given by \cite[p. 13]{Penrose03Random}:\begin{equation}
\left\Vert \boldsymbol{x}_{1}-\boldsymbol{x}_{2}\right\Vert ^{T}\triangleq\min\left\{ \left\Vert \boldsymbol{x}_{1}+\boldsymbol{z}-\boldsymbol{x}_{2}\right\Vert :\boldsymbol{z}\in\mathbb{Z}^{2}\right\} \label{eq:definition of toroidal distance in a unit torus}\end{equation}
The toroidal distance between points on a torus of any other size
can be computed analogously. Such treatment allows nodes located near
the boundary to have the same number of connections \emph{probabilistically}
as a node located near the center. Therefore it allows the removal
of the boundary effect that is present in a square. \emph{The consideration
of a torus implies that there is no need to consider special cases
occurring near the boundary of the region and that events inside the
region do not depend on the particular location inside the region.}
This often simplifies the analysis however. From now on, we use the
same symbol, $A$, to denote a torus and a square. Whenever the difference
between a torus and a square affects the parameter being discussed,
we use superscript $^{T}$ (respectively $^{S}$) to mark the parameter
in a torus (respectively a square). 

We note the following relation between toroidal distance and Euclidean
distance on a square area centered at the origin:\begin{eqnarray}
\left\Vert \boldsymbol{x}_{1}-\boldsymbol{x}_{2}\right\Vert ^{T}\leq\left\Vert \boldsymbol{x}_{1}-\boldsymbol{x}_{2}\right\Vert \;\;\textrm{and}\;\;\left\Vert \boldsymbol{x}\right\Vert ^{T}=\left\Vert \boldsymbol{x}\right\Vert \label{eq:property of toroidal distance 1}\end{eqnarray}
which will be used in the later analysis.

\subsection{Properties of isolated nodes on a torus\label{sub:The-number-of-isolated-nodes-torus}}

Divide the unit torus into $m^{2}$ non-overlapping squares each with
size $\frac{1}{m^{2}}$. Denote the $i_{m}^{th}$ square by $A_{i_{m}}$.
Define two sets of indicator random variables $J_{i_{m}}^{T}$ and
$I_{i_{m}}^{T}$ with $i_{m}\in\Gamma_{m}\triangleq\{1,\ldots m^{2}\}$,
where $J_{i_{m}}^{T}=1$ iff there exists exactly one node in $A_{i_{m}}$,
otherwise $J_{i_{m}}^{T}=0$; $I_{i_{m}}^{T}=1$ iff there is exactly
one node in $A_{i_{m}}$ \emph{and} that node is isolated, $I_{i_{m}}^{T}=0$
otherwise. Obviously $J_{i_{m}}^{T}$ is independent of $J_{j_{m}}^{T},j_{m}\in\Gamma_{m}\backslash\left\{ i_{m}\right\} $.
Denote the center of $A_{i_{m}}^{T}$ by $\boldsymbol{x}_{i_{m}}$
and without loss of generality we assume that when $J_{i_{m}}^{T}=1$,
the associated node in $A_{i_{m}}$ is at $\boldsymbol{x}_{i_{m}}$%
\footnote{In this paper we are mainly concerned with the case that $m\rightarrow\infty$,
i.e. the size of the squer is varnishingly small. Therefore the actual
position of the node in the square is not important.%
}. Observe that for any fixed $m$, the values of $Pr\left(I_{i_{m}}^{T}=1\right)$
and $Pr\left(J_{i_{m}}^{T}=1\right)$ do not depend on the particular
index $i_{m}$ on a torus. However both the set of indices $\Gamma_{m}$
and a particular index $i_{m}$ depend on $m$. As $m$ changes, the
square associated with $I_{i_{m}}^{T}$ and $J_{i_{m}}^{T}$ also
changes. Without causing ambiguity, we drop the explicit dependence
on $m$ in our notations for convenience. As an easy consequence of
the Poisson node distribution, \begin{equation}
\lim_{m\rightarrow\infty}\frac{Pr\left(J_{i}^{T}=1\right)}{\frac{\rho}{m^{2}}}=1\label{eq:Prob of having a node in S_i}\end{equation}
and as $m\rightarrow\infty$, the probability that there is more than
one node in $A_{i}$ becomes vanishingly small compared to $Pr\left(J_{i}^{T}=1\right)$.
Further, using the relationship that\begin{equation}
Pr\left(I_{i}^{T}=1\right)=Pr\left(I_{i}^{T}=1|J_{i}^{T}=1\right)Pr\left(J_{i}^{T}=1\right)\label{eq:Prob Ii1 Bayes}\end{equation}
it can be shown that\begin{eqnarray}
 &  & Pr\left(I_{i}^{T}=1\right)\nonumber \\
 & = & Pr\left(J_{i}^{T}=1\right)\nonumber \\
 & \times & \prod_{j\in\Gamma\backslash\left\{ i\right\} }\left[Pr\left(J_{j}^{T}=1\right)\left(1-g\left(\frac{\left\Vert \boldsymbol{x}_{i}-\boldsymbol{x}_{j}\right\Vert ^{T}}{r_{\rho}}\right)\right)\right.\nonumber \\
 & + & \left(1-Pr\left(J_{j}^{T}=1\right)-o_{m}\left(Pr\left(J_{j}^{T}=1\right)\right)\right)\nonumber \\
 & + & \left.o_{m}\left(Pr\left(J_{j}^{T}=1\right)\left(1-g\left(\frac{\left\Vert \boldsymbol{x}_{i}-\boldsymbol{x}_{j}\right\Vert ^{T}}{r_{\rho}}\right)\right)\right)\right]\label{eq:Prob Isolated Node Discrete}\end{eqnarray}
In \eqref{eq:Prob Isolated Node Discrete}, the term $Pr\left(J_{j}^{T}=1\right)\left(1-g\left(\frac{\left\Vert \boldsymbol{x}_{i}-\boldsymbol{x}_{j}\right\Vert ^{T}}{r_{\rho}}\right)\right)$
represents the probability of the event that there is a node in $A_{j}$
\emph{and }that node is not directly connected to the node in $A_{i}$,
the term $\left(1-Pr\left(J_{j}^{T}=1\right)-o_{m}\left(Pr\left(J_{j}^{T}=1\right)\right)\right)$
represents the probability of the event that there is no node in $A_{j}$
and the last term accounts for the situation that there is more than
one node in $A_{j}$. It then follows that \begin{eqnarray}
 &  & \lim_{m\rightarrow\infty}Pr\left(I_{i}^{T}=1|J_{i}=1\right)\nonumber \\
 & = & \lim_{m\rightarrow\infty}\prod_{j\in\Gamma\backslash\left\{ i\right\} }\left[1-Pr\left(J_{j}^{T}=1\right)g\left(\frac{\left\Vert \boldsymbol{x}_{i}-\boldsymbol{x}_{j}\right\Vert ^{T}}{r_{\rho}}\right)\right]\nonumber \\
 & = & e^{-\int_{A}\rho g\left(\frac{\left\Vert \boldsymbol{x}-\boldsymbol{x}_{i}\right\Vert ^{T}}{r_{\rho}}\right)d\boldsymbol{x}}\label{eq:con prob isolated node}\\
 & = & e^{-\int_{A}\rho g\left(\frac{\left\Vert \boldsymbol{x}\right\Vert ^{T}}{r_{\rho}}\right)d\boldsymbol{x}}\label{eq:con prob isolated translation}\end{eqnarray}
where \eqref{eq:con prob isolated translation} results from \eqref{eq:con prob isolated node}
due to nodes being distributed on a torus. Further, using \eqref{eq:Prob of having a node in S_i},
\eqref{eq:Prob Ii1 Bayes} and \eqref{eq:con prob isolated translation},
it is evident that\begin{equation}
Pr\left(I_{i}^{T}=1\right)\sim_{m}\frac{\rho}{m^{2}}e^{-\int_{A}\rho g\left(\frac{\left\Vert \boldsymbol{x}\right\Vert ^{T}}{r_{\rho}}\right)d\boldsymbol{x}}\label{eq:prob isolated node}\end{equation}

Define $W_{m}^{T}=\sum_{i=1}^{m^{2}}I_{i}^{T}$ and $W^{T}=\lim_{m\rightarrow\infty}W_{m}^{T}$,
where $W^{T}$ has the meaning of the total number of isolated nodes
in $A$. It then follows that\begin{equation}
E\left(W^{T}\right)=\lim_{m\rightarrow\infty}E\left(W_{m}^{T}\right)=\rho e^{-\int_{A}\rho g\left(\frac{\left\Vert \boldsymbol{x}\right\Vert ^{T}}{r_{\rho}}\right)d\boldsymbol{x}}\label{eq:expected number of isolated nodes finite network}\end{equation}

It can be shown that

\begin{eqnarray}
 &  & \lim_{\rho\rightarrow\infty}\rho e^{-\int_{D\left(\boldsymbol{0},r_{\rho}^{1-\varepsilon}\right)}\rho g\left(\frac{\left\Vert \boldsymbol{x}\right\Vert }{r_{\rho}}\right)d\boldsymbol{x}}\nonumber \\
 & = & \lim_{\rho\rightarrow\infty}\rho e^{-\rho r_{\rho}^{2}\int_{D\left(\boldsymbol{0},r_{\rho}^{-\varepsilon}\right)}g\left(\left\Vert \boldsymbol{x}\right\Vert \right)d\boldsymbol{x}}\nonumber \\
 & = & \lim_{\rho\rightarrow\infty}\rho e^{-\rho r_{\rho}^{2}\left(C-\int_{\Re^{2}\backslash D\left(\boldsymbol{0},r_{\rho}^{-\varepsilon}\right)}g\left(\left\Vert \boldsymbol{x}\right\Vert \right)d\boldsymbol{x}\right)}\nonumber \\
 & = & e^{-b}\lim_{\rho\rightarrow\infty}e^{\frac{\log\rho+b}{C}\int_{r_{\rho}^{-\varepsilon}}^{\infty}2\pi xg\left(x\right)dx}=e^{-b}\label{eq:an analysis of the trunction effect}\end{eqnarray}
where $D\left(\boldsymbol{0},x\right)$ denotes a disk centered at
the origin and with a radius $x$, $\varepsilon$ is a small positive
constant, and the last step results because \begin{eqnarray}
 &  & \lim_{\rho\rightarrow\infty}\frac{\int_{r_{\rho}^{-\varepsilon}}^{\infty}2\pi xg\left(x\right)dx}{\frac{1}{\log\rho+b}}\nonumber \\
 & = & \lim_{\rho\rightarrow\infty}\frac{\pi\varepsilon r_{\rho}^{-\varepsilon}g\left(r_{\rho}^{-\varepsilon}\right)r_{\rho}^{-\varepsilon-2}\frac{\log\rho+b-1}{C\rho^{2}}}{\frac{1}{\rho\left(\log\rho+b\right)^{2}}}\label{eq:little's rule trunction}\\
 & = & \lim_{\rho\rightarrow\infty}\pi\varepsilon\left(\log\rho+b\right)^{2}r_{\rho}^{-2\varepsilon}o_{\rho}\left(\frac{1}{r_{\rho}^{-2\varepsilon}\log^{2}\left(r_{\rho}^{-2\varepsilon}\right)}\right)=0\nonumber \end{eqnarray}
where L'H�pital's rule is used in reaching \eqref{eq:little's rule trunction}
and in the third step $g\left(x\right)=o_{x}\left(\frac{1}{x^{2}\log^{2}x}\right)$
is used. As a consequence of \eqref{eq:property of toroidal distance 1},
\eqref{eq:expected number of isolated nodes finite network}, \eqref{eq:an analysis of the trunction effect}
and that $e^{-b}=\lim_{\rho\rightarrow\infty}\rho e^{-\int_{\Re^{2}}\rho g\left(\frac{\left\Vert \boldsymbol{x}\right\Vert }{r_{\rho}}\right)d\boldsymbol{x}}\leq\lim_{\rho\rightarrow\infty}\rho e^{-\int_{A}\rho g\left(\frac{\left\Vert \boldsymbol{x}\right\Vert }{r_{\rho}}\right)d\boldsymbol{x}}\leq\lim_{\rho\rightarrow\infty}\rho e^{-\int_{D\left(\boldsymbol{0},r_{\rho}^{1-\varepsilon}\right)}\rho g\left(\frac{\left\Vert \boldsymbol{x}\right\Vert }{r_{\rho}}\right)d\boldsymbol{x}}=e^{-b}$:

\begin{eqnarray}
\lim_{\rho\rightarrow\infty}E\left(W^{T}\right) & = & e^{-b}\label{eq:Expected number of isolated nodes asymptotic}\end{eqnarray}
The above analysis is summarized Lemma \ref{lem:Expected Isolated nodes torus}.
\begin{lemma}
\label{lem:Expected Isolated nodes torus}The expected number of isolated
nodes in $\mathcal{G}^{T}\left(\mathcal{X}_{\rho},g_{\rho}\right)$
is $\rho e^{-\int_{A}\rho g\left(\frac{\left\Vert \boldsymbol{x}\right\Vert ^{T}}{r_{\rho}}\right)d\boldsymbol{x}}$.
As $\rho\rightarrow\infty$, the expected number of isolated nodes
in $\mathcal{G}^{T}\left(\mathcal{X}_{\rho},g_{\rho}\right)$ converges
to $e^{-b}$.\end{lemma}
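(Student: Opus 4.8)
The plan is to prove the two assertions in turn: the exact finite-$\rho$ formula $E(W^T)=\rho\,e^{-\int_A \rho g(\|\boldsymbol{x}\|^T/r_\rho)\,d\boldsymbol{x}}$, and then its limit $e^{-b}$. For the exact formula I would exploit the discretization already in place: partition the torus into $m^2$ cells $A_i$ of area $m^{-2}$ and work with the indicators $I_i^T,J_i^T$. Translation invariance of the torus makes $Pr(I_i^T=1)$ independent of $i$, so $E(W_m^T)=m^2\,Pr(I_i^T=1)$. Conditioning as in \eqref{eq:Prob Ii1 Bayes}, I would factorize the conditional isolation probability over the cells $j\neq i$: since the Poisson occupancies of disjoint cells and the pairwise connection events are all independent, the probability that the node at $\boldsymbol{x}_i$ connects to no other node is the product in \eqref{eq:Prob Isolated Node Discrete}. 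Letting $m\to\infty$ replaces each factor by $1-\tfrac{\rho}{m^2}g(\|\boldsymbol{x}_i-\boldsymbol{x}_j\|^T/r_\rho)+o_m(m^{-2})$, and the standard passage $\prod(1-a_j)\to\exp(-\sum a_j)$, with $\sum_j \tfrac{\rho}{m^2}g(\cdot)$ a Riemann sum for $\int_A \rho g(\|\boldsymbol{x}\|^T/r_\rho)\,d\boldsymbol{x}$, gives \eqref{eq:con prob isolated translation}. Multiplying by $Pr(J_i^T=1)\sim\rho/m^2$ and summing over the $m^2$ cells produces \eqref{eq:expected number of isolated nodes finite network}.

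For the limit, the key exact identity is that the substitution $\boldsymbol{x}\mapsto\boldsymbol{x}/r_\rho$ together with $r_\rho^2=(\log\rho+b)/(C\rho)$ gives $\int_{\Re^2}\rho g(\|\boldsymbol{x}\|/r_\rho)\,d\boldsymbol{x}=\rho r_\rho^2 C=\log\rho+b$, whence $\rho\,e^{-\int_{\Re^2}\rho g(\|\boldsymbol{x}\|/r_\rho)\,d\boldsymbol{x}}=\rho\,e^{-(\log\rho+b)}=e^{-b}$. To transfer this to the toroidal integral I would first use the second identity in \eqref{eq:property of toroidal distance 1}: because $A=[-\tfrac12,\tfrac12)^2$ is the Voronoi cell of the origin in $\mathbb{Z}^2$, one has $\|\boldsymbol{x}\|^T=\|\boldsymbol{x}\|$ throughout $A$, so $\int_A \rho g(\|\boldsymbol{x}\|^T/r_\rho)\,d\boldsymbol{x}=\int_A \rho g(\|\boldsymbol{x}\|/r_\rho)\,d\boldsymbol{x}$. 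Since for large $\rho$ the disk $D(\boldsymbol{0},r_\rho^{1-\varepsilon})$ shrinks into $A$, the nonnegativity of the integrand together with the region inclusions $D(\boldsymbol{0},r_\rho^{1-\varepsilon})\subset A\subset\Re^2$ give $\int_{D(\boldsymbol{0},r_\rho^{1-\varepsilon})}\le\int_A\le\int_{\Re^2}$, hence the sandwich $e^{-b}\le\lim_{\rho\to\infty}\rho\,e^{-\int_A \rho g(\|\boldsymbol{x}\|^T/r_\rho)\,d\boldsymbol{x}}\le\lim_{\rho\to\infty}\rho\,e^{-\int_{D(\boldsymbol{0},r_\rho^{1-\varepsilon})}\rho g(\|\boldsymbol{x}\|/r_\rho)\,d\boldsymbol{x}}$, which is exactly the chain displayed before \eqref{eq:Expected number of isolated nodes asymptotic}.

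The main obstacle is the truncation estimate at the upper end: showing the small-disk bound still tends to $e^{-b}$, equivalently that deleting the scaled tail $\Re^2\setminus D(\boldsymbol{0},r_\rho^{-\varepsilon})$ contributes negligibly even after amplification by $\rho r_\rho^2=(\log\rho+b)/C$, as in \eqref{eq:an analysis of the trunction effect}. Concretely I would establish $(\log\rho+b)\int_{r_\rho^{-\varepsilon}}^{\infty}2\pi x\,g(x)\,dx\to 0$; applying L'H\^opital's rule to the ratio in \eqref{eq:little's rule trunction} and then substituting the sharpened tail bound $g(x)=o_x(1/(x^2\log^2 x))$ makes the growing prefactor vanish. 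This is precisely the step at which the weaker decay $g(x)=o_x(1/x^2)$ would be insufficient, so all the delicacy rests on the rate at which $g$ decays. With both ends of the sandwich equal to $e^{-b}$, equation \eqref{eq:Expected number of isolated nodes asymptotic} follows, and together with \eqref{eq:expected number of isolated nodes finite network} this establishes both claims of the lemma.
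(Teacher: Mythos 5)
Your proposal is correct and takes essentially the same route as the paper's own proof: the identical cell-discretization and product-to-exponential (Riemann sum) limit for the exact formula $\rho e^{-\int_{A}\rho g\left(\frac{\left\Vert \boldsymbol{x}\right\Vert ^{T}}{r_{\rho}}\right)d\boldsymbol{x}}$, followed by the same sandwich $D\left(\boldsymbol{0},r_{\rho}^{1-\varepsilon}\right)\subset A\subset\Re^{2}$ and the same truncation estimate $\left(\log\rho+b\right)\int_{r_{\rho}^{-\varepsilon}}^{\infty}2\pi xg\left(x\right)dx\rightarrow0$ via L'H\^opital's rule and the sharpened condition $g\left(x\right)=o_{x}\left(\frac{1}{x^{2}\log^{2}x}\right)$. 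The only (harmless) addition is your explicit Voronoi-cell justification of $\left\Vert \boldsymbol{x}\right\Vert ^{T}=\left\Vert \boldsymbol{x}\right\Vert $ on $A$, which the paper simply records as \eqref{eq:property of toroidal distance 1}.
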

\begin{remrk}
\label{rem:the need for tighter condition on g}Using \eqref{eq:conditions on g(x)},
it can be shown that $C=\int_{\Re^{2}}g\left(\left\Vert \boldsymbol{x}\right\Vert \right)d\boldsymbol{x}\geq\lim_{z\rightarrow\infty}\int_{0}^{z}2\pi xg\left(z\right)dx=\lim_{z\rightarrow\infty}\pi z^{2}g\left(z\right)$.
Therefore $\lim_{z\rightarrow\infty}\frac{g\left(z\right)}{\frac{1}{z^{2}}}\leq\frac{C}{\pi}$.
It can then be shown that the only possibility is $\lim_{z\rightarrow\infty}\frac{g\left(z\right)}{\frac{1}{z^{2}}}=0$
and that the other possibilities where $\lim_{z\rightarrow\infty}\frac{g\left(z\right)}{\frac{1}{z^{2}}}\neq0$
can be ruled out by contradiction with \eqref{eq:conditions on g(x)}.
Thus \begin{equation}
g\left(x\right)=o_{x}\left(1/x^{2}\right)\label{eq:scaling property of g(x)}\end{equation}
Further the condition $g\left(x\right)=o_{x}\left(\frac{1}{x^{2}\log^{2}x}\right)$
is only required for $\rho r_{\rho}^{2}\int_{\Re^{2}\backslash D\left(\boldsymbol{0},r_{\rho}^{-\varepsilon}\right)}g\left(\left\Vert \boldsymbol{x}\right\Vert \right)d\boldsymbol{x}$
to asymptotically converge to $0$, where the term $\rho r_{\rho}^{2}\int_{\Re^{2}\backslash D\left(\boldsymbol{0},r_{\rho}^{-\varepsilon}\right)}g\left(\left\Vert \boldsymbol{x}\right\Vert \right)d\boldsymbol{x}$
is associated with (the removal of) connections between a node at
$\boldsymbol{0}$ and other nodes outside $D\left(\boldsymbol{0},r_{\rho}^{-\varepsilon}\right)$.
Evaluation of $\rho r_{\rho}^{2}\int_{\Re^{2}\backslash D\left(\boldsymbol{0},r_{\rho}^{-\varepsilon}\right)}g\left(\left\Vert \boldsymbol{x}\right\Vert \right)d\boldsymbol{x}$
for an area larger than $D\left(\boldsymbol{0},r_{\rho}^{-\varepsilon}\right)$
(but not greater than $A_{\rho}$) does not remove the need for the
condition. Thus the more restrictive requirement on $g$ that $g\left(x\right)=o_{x}\left(\frac{1}{x^{2}\log^{2}x}\right)$
is attributable to the \emph{truncation effect }that arises when considering
connectivity in a (asymptotically infinite) finite region instead
of an infinite area. 
\end{remrk}
Now consider the event $I_{i}^{T}I_{j}^{T}=1,i\neq j$ conditioned
on the event that $J_{i}^{T}J_{j}^{T}=1$, meaning that both nodes
having been placed inside $A_{i}$ and $A_{j}$ respectively are isolated.
Following the same steps leading to \eqref{eq:con prob isolated translation},
it can be shown that \begin{eqnarray}
 &  & \lim_{m\rightarrow\infty}Pr\left(I_{i}^{T}I_{j}^{T}=1|J_{i}^{T}J_{j}^{T}=1\right)\nonumber \\
 & = & \left(1-g\left(\frac{\left\Vert \boldsymbol{x}_{i}-\boldsymbol{x}_{j}\right\Vert ^{T}}{r_{\rho}}\right)\right)\nonumber \\
 & \times & \exp\left[-\int_{A}\rho\left(g\left(\frac{\left\Vert \boldsymbol{x}-\boldsymbol{x}_{i}\right\Vert ^{T}}{r_{\rho}}\right)+g\left(\frac{\left\Vert \boldsymbol{x}-\boldsymbol{x}_{j}\right\Vert ^{T}}{r_{\rho}}\right)\right.\right.\nonumber \\
 & - & \left.\left.g\left(\frac{\left\Vert \boldsymbol{x}-\boldsymbol{x}_{i}\right\Vert ^{T}}{r_{\rho}}\right)g\left(\frac{\left\Vert \boldsymbol{x}-\boldsymbol{x}_{j}\right\Vert ^{T}}{r_{\rho}}\right)\right)d\boldsymbol{x}\right]\label{eq:joint distribution of isolated node events}\end{eqnarray}
where the term $\left(1-g\left(\frac{\left\Vert \boldsymbol{x}_{i}-\boldsymbol{x}_{j}\right\Vert ^{T}}{r_{\rho}}\right)\right)$
is due to the consideration that the two nodes located inside $A_{i}$
and $A_{j}$ cannot be directly connected in order for both nodes
to be isolated. Observe also that: \begin{eqnarray}
 &  & Pr\left(I_{i}^{T}I_{j}^{T}=1\right)\nonumber \\
 & = & Pr\left(J_{i}^{T}J_{j}^{T}=1\right)Pr\left(I_{j}^{T}I_{j}^{T}=1|J_{i}^{T}J_{j}^{T}=1\right)\label{eq:Joint distribution of I_i and I_j}\end{eqnarray}
Now using \eqref{eq:Prob of having a node in S_i}, \eqref{eq:prob isolated node},
\eqref{eq:joint distribution of isolated node events} and \eqref{eq:Joint distribution of I_i and I_j},
it can be established that\begin{eqnarray}
 &  & Pr\left(I_{i}^{T}=1|I_{j}^{T}=1\right)\nonumber \\
 & \sim_{m} & \frac{\rho}{m^{2}}\left(1-g\left(\frac{\left\Vert \boldsymbol{x}_{i}-\boldsymbol{x}_{j}\right\Vert ^{T}}{r_{\rho}}\right)\right)\nonumber \\
 & \times & e^{-\int_{A}\rho\left(g\left(\frac{\left\Vert \boldsymbol{x}-\boldsymbol{x}_{i}\right\Vert ^{T}}{r_{\rho}}\right)-g\left(\frac{\left\Vert \boldsymbol{x}-\boldsymbol{x}_{i}\right\Vert ^{T}}{r_{\rho}}\right)g\left(\frac{\left\Vert \boldsymbol{x}-\boldsymbol{x}_{j}\right\Vert ^{T}}{r_{\rho}}\right)\right)d\boldsymbol{x}}\label{eq:conditional distribution of I_i|I_j}\end{eqnarray}
\begin{eqnarray}
 &  & \lim_{m\rightarrow\infty}\frac{Pr\left(I_{i}^{T}I_{j}^{T}=1\right)}{Pr\left(I_{i}^{T}=1\right)Pr\left(I_{j}^{T}=1\right)}\nonumber \\
 & = & (1-g(\frac{\left\Vert \boldsymbol{x}_{i}-\boldsymbol{x}_{j}\right\Vert ^{T}}{r_{\rho}}))e^{\int_{A}\rho g(\frac{\left\Vert \boldsymbol{x}-\boldsymbol{x}_{i}\right\Vert ^{T}}{r_{\rho}})g(\frac{\left\Vert \boldsymbol{x}-\boldsymbol{x}_{j}\right\Vert ^{T}}{r_{\rho}})d\boldsymbol{x}}\label{eq:ratio correlation}\end{eqnarray}
Using \eqref{eq:Prob of having a node in S_i}, \eqref{eq:con prob isolated node},
\eqref{eq:joint distribution of isolated node events}, \eqref{eq:Joint distribution of I_i and I_j}
and the above equation, it can also be obtained that

\begin{eqnarray}
 &  & \lim_{m\rightarrow\infty}\frac{Pr\left(I_{i}^{T}=1,I_{j}^{T}=0\right)}{Pr\left(I_{i}^{T}=1\right)Pr\left(I_{j}^{T}=0\right)}\nonumber \\
 & = & \lim_{m\rightarrow\infty}\frac{Pr\left(I_{i}^{T}=1\right)-Pr\left(I_{i}^{T}I_{j}^{T}=1\right)}{Pr\left(I_{i}^{T}=1\right)Pr\left(I_{j}^{T}=0\right)}\nonumber \\
 & = & \lim_{m\rightarrow\infty}\left(1-\frac{\rho}{m^{2}}e^{-\int_{A}\rho g\left(\frac{\left\Vert \boldsymbol{x}-\boldsymbol{x}_{j}\right\Vert ^{T}}{r_{\rho}}\right)d\boldsymbol{x}}\right)^{-1}\nonumber \\
 & \times & \left[1-\frac{\rho}{m^{2}}\left(1-g\left(\frac{\left\Vert \boldsymbol{x}_{i}-\boldsymbol{x}_{j}\right\Vert ^{T}}{r_{\rho}}\right)\right)\right.\nonumber \\
 & \times & \left.e^{-\int_{A}\rho g\left(\frac{\left\Vert \boldsymbol{x}-\boldsymbol{x}_{j}\right\Vert ^{T}}{r_{\rho}}\right)\left(1-g\left(\frac{\left\Vert \boldsymbol{x}-\boldsymbol{x}_{i}\right\Vert ^{T}}{r_{\rho}}\right)\right)d\boldsymbol{x}}\right]\label{eq:relation isolated node and non-isolated}\end{eqnarray}

\subsection{The asymptotic distribution of the number of isolated nodes\label{sec:The-Distribution-of-the-number-of-isolated-nodes} }

On the basis of the discussion in the last subsection, in this subsection
we consider the distribution of the number of isolated nodes in $\mathcal{G}^{T}\left(\mathcal{X}_{\rho},g_{\rho}\right)$
as $\rho\rightarrow\infty$ . Our analysis relies on the use of the
Chen-Stein bound \cite{Arratia90Poisson,Barbour03Poisson}. The Chen-Stein
bound is named after the work of Stein \cite{Stein72A} and Chen \cite{Chen75An,Chen75Poisson}.
It is well known that the number of occurrences of \emph{independently}
distributed \emph{rare} events in a region can often be accurately
approximated by a Poisson distribution \cite{Barbour03Poisson}. In
\cite{Stein72A}, Stein developed a novel method for showing the convergence
in distribution to the normal of the sum of a number of \emph{dependent}
random variables. In \cite{Chen75An,Chen75Poisson} Chen applied Stein's
ideas in the Poisson setting and derived an upper bound on the \emph{total
variation distance}, a concept defined in the theorem statement below,
between the distribution of the sum of a number of \emph{dependent}
random indicator variables and the associated Poisson distribution.
The following theorem gives a formal statement of the Chen-Stein bound.
\begin{thm}
\label{thm:Chen-Stein Bound}\cite[Theorem 1.A ]{Barbour03Poisson}
For a set of indicator random variables $I_{i},\; i\in\Gamma$, define
$W\triangleq\sum_{i\in\Gamma}I_{i}$, $p_{i}\triangleq E\left(I_{i}\right)$
and $\lambda\triangleq E\left(W\right)$. For any choice of the index
set $\Gamma_{s,i}\subset\Gamma$, $\Gamma_{s,i}\cap\{i\}=\{\textrm{�}\}$,\begin{eqnarray*}
 &  & d_{TV}\left(\mathcal{L}\left(W\right),Po\left(\lambda\right)\right)\\
 & \leq & \sum_{i\in\Gamma}\left[\left(p_{i}^{2}+p_{i}E\left(\sum_{j\in\Gamma_{s,i}}I_{j}\right)\right)\right]min\left(1,\frac{1}{\lambda}\right)\\
 & + & \sum_{i\in\Gamma}E\left(I_{i}\sum_{j\in\Gamma_{s,i}}I_{j}\right)min\left(1,\frac{1}{\lambda}\right)\\
 & + & \sum_{i\in\Gamma}E\left|E\left\{ I_{i}\left|\left(I_{j},j\in\Gamma_{w,i}\right)\right.\right\} -p_{i}\right|min\left(1,\frac{1}{\sqrt{\lambda}}\right)\end{eqnarray*}
where $\mathcal{L}\left(W\right)$ denotes the distribution of $W$,
$Po\left(\lambda\right)$ denotes a Poisson distribution with mean
$\lambda$, $\Gamma_{w,i}=\Gamma\backslash\left\{ \Gamma_{s,i}\cup\{i\}\right\} $
and $d_{TV}$ denotes the total variation distance. The total variation
distance between two probability distributions $\alpha$ and $\beta$
on $\mathbb{Z}^{+}$ is defined by \[
d_{TV}\left(\alpha,\beta\right)\triangleq\sup\left\{ \left|\alpha\left(A\right)-\beta\left(A\right)\right|:A\subset\mathbb{Z}^{+}\right\} \]

\end{thm}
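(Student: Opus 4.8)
The statement is the classical Chen--Stein Poisson approximation bound (here quoted from Barbour--Holst--Janson), so the natural route is Stein's method via the Stein--Chen equation, and I would reproduce the local, dependency-neighbourhood form of the argument. The plan is to first introduce the Poisson Stein operator $\left(\mathcal{A}f\right)\left(k\right)=\lambda f\left(k+1\right)-kf\left(k\right)$ and recall its characterising property: a $\mathbb{Z}^{+}$-valued variable $Z$ has distribution $Po\left(\lambda\right)$ if and only if $E\left[\lambda f\left(Z+1\right)-Zf\left(Z\right)\right]=0$ for every bounded $f$. For a fixed target set $A\subset\mathbb{Z}^{+}$ I would solve the difference equation
\begin{equation*}
\lambda f_{A}\left(k+1\right)-kf_{A}\left(k\right)=\mathbf{1}_{A}\left(k\right)-Po\left(\lambda\right)\left(A\right)
\end{equation*}
for $f_{A}$, so that substituting $W$ for the argument yields the exact identity
\begin{equation*}
Pr\left(W\in A\right)-Po\left(\lambda\right)\left(A\right)=E\left[\lambda f_{A}\left(W+1\right)-Wf_{A}\left(W\right)\right].
\end{equation*}
Taking the supremum over $A$ then reduces the entire problem to bounding the right-hand side uniformly in $A$.

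Next I would record the two \emph{Stein magic factors}: uniformly in $A$, the solution satisfies $\sup_{k}\left|f_{A}\left(k+1\right)-f_{A}\left(k\right)\right|\leq\min\left(1,1/\lambda\right)$ and $\sup_{k}\left|f_{A}\left(k\right)\right|\leq\min\left(1,1/\sqrt{\lambda}\right)$. These follow from the explicit representation of $f_{A}$ as a normalised partial sum of Poisson weights, together with the monotonicity of those sums, and they are precisely the source of the factors $\min\left(1,1/\lambda\right)$ and $\min\left(1,1/\sqrt{\lambda}\right)$ that appear in the three terms of the claimed bound.

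The third and decisive ingredient is the local decomposition. Writing $\lambda f_{A}\left(W+1\right)=\sum_{i\in\Gamma}p_{i}f_{A}\left(W+1\right)$ and $Wf_{A}\left(W\right)=\sum_{i\in\Gamma}I_{i}f_{A}\left(W\right)$, and splitting $W=I_{i}+Z_{i}+V_{i}$ with $Z_{i}=\sum_{j\in\Gamma_{s,i}}I_{j}$ and $V_{i}=\sum_{j\in\Gamma_{w,i}}I_{j}$, I would decompose each summand $p_{i}Ef_{A}\left(W+1\right)-E\left[I_{i}f_{A}\left(W\right)\right]$ by adding and subtracting $f_{A}\left(V_{i}+1\right)$ as
\begin{equation*}
p_{i}E\bigl[f_{A}\left(W+1\right)-f_{A}\left(V_{i}+1\right)\bigr]-E\bigl[I_{i}\bigl(f_{A}\left(W\right)-f_{A}\left(V_{i}+1\right)\bigr)\bigr]+E\bigl[\left(p_{i}-I_{i}\right)f_{A}\left(V_{i}+1\right)\bigr].
\end{equation*}
In the first term $f_{A}\left(W+1\right)-f_{A}\left(V_{i}+1\right)$ telescopes over $I_{i}+Z_{i}$ unit steps, so its modulus is at most $\|\Delta f_{A}\|\left(I_{i}+Z_{i}\right)$, giving after summation the contribution $\sum_{i}\left(p_{i}^{2}+p_{i}E\sum_{j\in\Gamma_{s,i}}I_{j}\right)\min\left(1,1/\lambda\right)$; in the second term, on the event $\{I_{i}=1\}$ the difference telescopes over $Z_{i}$ steps, yielding $\sum_{i}E\left(I_{i}\sum_{j\in\Gamma_{s,i}}I_{j}\right)\min\left(1,1/\lambda\right)$; in the third term, since $f_{A}\left(V_{i}+1\right)$ is measurable with respect to $\left\{ I_{j}:j\in\Gamma_{w,i}\right\}$, conditioning on these variables rewrites $E\left[\left(p_{i}-I_{i}\right)f_{A}\left(V_{i}+1\right)\right]$ as $E\left[f_{A}\left(V_{i}+1\right)\left(p_{i}-E\{I_{i}\mid\left(I_{j},j\in\Gamma_{w,i}\right)\}\right)\right]$, bounded by $\|f_{A}\|\,E\left|E\{I_{i}\mid\left(I_{j},j\in\Gamma_{w,i}\right)\}-p_{i}\right|$, which is the third term with factor $\min\left(1,1/\sqrt{\lambda}\right)$. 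Summing the three contributions over $i$ and taking the supremum over $A$ then delivers exactly the stated bound.

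I expect the main obstacle to be the second ingredient, the Stein magic factors: establishing the uniform estimates $\min\left(1,1/\lambda\right)$ on $\|\Delta f_{A}\|$ and $\min\left(1,1/\sqrt{\lambda}\right)$ on $\|f_{A}\|$ is the technically delicate part, as it requires a careful analysis of the explicit solution of the Stein difference equation and sharp control of the relevant partial sums of Poisson probabilities. These factors are what make the approximation quantitatively useful, and their derivation is essentially independent of the purely algebraic decomposition used in the final step.
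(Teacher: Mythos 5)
The paper does not prove this statement at all: Theorem \ref{thm:Chen-Stein Bound} is quoted verbatim, with citation, as Theorem 1.A of the Barbour--Holst--Janson monograph, and the paper then applies it as a black box in the Appendix. So there is no internal proof to compare against; your proposal has to be judged on its own, and it is correct --- it is essentially the standard Stein--Chen argument that appears in the cited reference. The three ingredients are all in place and fit together as you describe: (i) the Stein equation $\lambda f_{A}(k+1)-kf_{A}(k)=\mathbf{1}_{A}(k)-Po(\lambda)(A)$ reduces $d_{TV}$ to bounding $E\left[\lambda f_{A}(W+1)-Wf_{A}(W)\right]$ uniformly in $A$; (ii) the decomposition $W=I_{i}+Z_{i}+V_{i}$ with $Z_{i}=\sum_{j\in\Gamma_{s,i}}I_{j}$, $V_{i}=\sum_{j\in\Gamma_{w,i}}I_{j}$, together with adding and subtracting $f_{A}(V_{i}+1)$, produces exactly the three sums $b_{1}$, $b_{2}$, $b_{3}$ (the telescoping bounds $|f_{A}(W+1)-f_{A}(V_{i}+1)|\leq\Vert\Delta f_{A}\Vert(I_{i}+Z_{i})$ and, on $\{I_{i}=1\}$, $|f_{A}(W)-f_{A}(V_{i}+1)|\leq\Vert\Delta f_{A}\Vert Z_{i}$ are right, as is the conditioning step that exploits the $\sigma(I_{j},j\in\Gamma_{w,i})$-measurability of $f_{A}(V_{i}+1)$); and (iii) the {}``magic factors'' supply the multipliers $\min(1,1/\lambda)$ and $\min(1,1/\sqrt{\lambda})$. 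You are also right that (iii) is the only technically delicate point; for completeness, the sharp forms in the literature are $\Vert\Delta f_{A}\Vert\leq\frac{1-e^{-\lambda}}{\lambda}\leq\min\left(1,\frac{1}{\lambda}\right)$ and $\Vert f_{A}\Vert\leq\min\left(1,\sqrt{\frac{2}{e\lambda}}\right)\leq\min\left(1,\frac{1}{\sqrt{\lambda}}\right)$, obtained from the explicit solution $f_{A}$ written as a normalized sum of Poisson weights, so the constants you claim are indeed attainable.
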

For convenience, we separate the bound in Theorem \ref{thm:Chen-Stein Bound}
into three terms $b_{1}min\left(1,\frac{1}{\lambda}\right)$, $b_{2}min\left(1,\frac{1}{\lambda}\right)$
and $b_{3}min\left(1,\frac{1}{\sqrt{\lambda}}\right)$ where $b_{1}\triangleq\sum_{i\in\Gamma}\left[\left(p_{i}^{2}+p_{i}E\left(\sum_{j\in\Gamma_{s,i}}I_{j}\right)\right)\right]$,
$b_{2}\triangleq\sum_{i\in\Gamma}E\left(I_{i}\sum_{j\in\Gamma_{s,i}}I_{j}\right)$
and $b_{3}\triangleq\sum_{i\in\Gamma}E\left|E\left\{ I_{i}\left|\left(I_{j},j\in\Gamma_{w,i}\right)\right.\right\} -p_{i}\right|$.

The set of indices $\Gamma_{s,i}$ is often chosen to contain all
those $j$, other than $i$, for which $I_{j}$ is {}``strongly''
dependent on $I_{i}$ and the set $\Gamma_{w,i}$ often contains all
other indices apart from $i$ for which $I_{j},j\in\Gamma_{W,i}$
are at most {}``weakly'' dependent on $I_{i}$ \cite{Arratia90Poisson}.
In many applications, by a suitable choice of $\Gamma_{s,i}$ the
$b_{3}$ term can be easily made to be $0$ and the evaluation of
the $b_{1}$ and $b_{2}$ terms involve the computation of the first
two moments of $W$ only, which can often be achieved relatively easily.
An example is a random geometric network under the unit disk model.
If $\Gamma_{s,i}$ is chosen to be a neighborhood of $i$ containing
indices of all nodes whose distance to $\boldsymbol{x}_{i}$ is less
than or equal to twice the transmission range, the $b_{3}$ term is
easily shown to be $0$. It can then be readily shown that the $b_{1}$
and $b_{2}$ terms approach $0$ as the neighbourhood size of a node
becomes vanishingly small compared to the overall network size as
$\rho\rightarrow\infty$ \cite{Franceschetti06Critical}. However
this is certainly not the case for the generic random connection model
where the dependence structure is global.

Using the Chen-Stein bound, the following theorem, which summarizes
a major result of this paper can be obtained:
\begin{thm}
\label{thm:Poison convergence of isolated nodes}The number of isolated
nodes in $\mathcal{G}^{T}\left(\mathcal{X}_{\rho},g_{\rho}\right)$
converges to a Poisson distribution with mean $e^{-b}$ as $\rho\rightarrow\infty$.\end{thm}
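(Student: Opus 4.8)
The plan is to apply the Chen-Stein bound of Theorem~\ref{thm:Chen-Stein Bound} to the indicator variables $I_i^T$, $i\in\Gamma_m$, letting $m\to\infty$ first to recover the continuum network $\mathcal{G}^T\left(\mathcal{X}_\rho,g_\rho\right)$ and then letting $\rho\to\infty$. By Lemma~\ref{lem:Expected Isolated nodes torus}, $\lambda=E(W_m^T)\to e^{-b}$, so the candidate Poisson mean is already correct and it remains only to show that the total variation distance $d_{TV}(\mathcal{L}(W^T),Po(\lambda))$ vanishes, i.e.\ that each of the three terms $b_1\min(1,\frac1\lambda)$, $b_2\min(1,\frac1\lambda)$ and $b_3\min(1,\frac1{\sqrt\lambda})$ tends to $0$. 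Since $\lambda\to e^{-b}>0$, the factors $\min(1,\frac1\lambda)$ and $\min(1,\frac1{\sqrt\lambda})$ stay bounded, so it suffices to prove $b_1,b_2,b_3\to0$.

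The crucial decision is the choice of the index sets. Because $g$ has in general unbounded support, the dependence among the $I_i^T$ is \emph{global}, and the device that sets $b_3=0$ for the unit disk model is unavailable. Instead I would take $\Gamma_{s,i}$ to consist of those $j$ whose cell centre satisfies $\left\Vert \boldsymbol{x}_i-\boldsymbol{x}_j\right\Vert^T\le r_\rho^{1-\varepsilon}$, with $\Gamma_{w,i}$ its complement; this is the scale appearing in the truncation analysis \eqref{eq:an analysis of the trunction effect}, chosen so that the neighbourhood radius is large compared with $r_\rho$ (far nodes genuinely weakly coupled) while its area $\pi r_\rho^{2(1-\varepsilon)}\to0$ (vanishing expected number of isolated nodes inside). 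With $p_i=Pr(I_i^T=1)\sim_m\frac{\rho}{m^2}e^{-\int_A\rho g}$ of order $e^{-b}/m^2$, the term $\sum_i p_i^2=O(1/m^2)\to0$ as $m\to\infty$, while the cross term of $b_1$ is at most $\lambda\max_i\sum_{j\in\Gamma_{s,i}}p_j$, and $\sum_{j\in\Gamma_{s,i}}p_j$ is asymptotically the expected number of isolated nodes in a disk of area $\pi r_\rho^{2(1-\varepsilon)}$, namely $e^{-b}\pi r_\rho^{2(1-\varepsilon)}\to0$. Hence $b_1\to0$.

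The hard part will be $b_2=\sum_i\sum_{j\in\Gamma_{s,i}}Pr(I_i^TI_j^T=1)$. Using the correlation ratio \eqref{eq:ratio correlation}, $Pr(I_i^TI_j^T=1)\sim p_ip_j(1-g(\frac{\left\Vert \boldsymbol{x}_i-\boldsymbol{x}_j\right\Vert^T}{r_\rho}))e^{\int_A\rho g(\frac{\left\Vert \boldsymbol{x}-\boldsymbol{x}_i\right\Vert^T}{r_\rho})g(\frac{\left\Vert \boldsymbol{x}-\boldsymbol{x}_j\right\Vert^T}{r_\rho})d\boldsymbol{x}}$, so after passing to the continuum $b_2$ reduces to a constant multiple of $V=\frac{\log\rho+b}{C\rho}\int_{\left\Vert \boldsymbol{v}\right\Vert\le r_\rho^{-\varepsilon}}(1-g(\left\Vert \boldsymbol{v}\right\Vert))e^{\frac{\log\rho+b}{C}\psi(\boldsymbol{v})}d\boldsymbol{v}$, where $\psi(\boldsymbol{v})=\int g(\left\Vert \boldsymbol{w}\right\Vert)g(\left\Vert \boldsymbol{w}-\boldsymbol{v}\right\Vert)d\boldsymbol{w}$ is the autocorrelation of $g$ arising from the substitution $\boldsymbol{v}=(\boldsymbol{x}_j-\boldsymbol{x}_i)/r_\rho$. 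The danger is that $\frac{\log\rho}{C}\psi(\boldsymbol{v})$ can be of order $\log\rho$, so the integrand grows polynomially in $\rho$ and competes with the prefactor $\frac{\log\rho}{\rho}$. The saving mechanism is twofold: by Cauchy--Schwarz $\psi(\boldsymbol{v})\le\psi(\boldsymbol{0})=\int g^2\le\int g=C$, with equality only where $g\in\{0,1\}$, so away from such $g$ the integrand is $O(\rho^{\int g^2/C})$ with $\int g^2/C<1$; and precisely where $\psi$ nears its maximum the suppression factor $1-g(\left\Vert \boldsymbol{v}\right\Vert)$ becomes small (it vanishes on the support of an indicator $g$). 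Controlling the intermediate range of $\left\Vert \boldsymbol{v}\right\Vert$, where $\psi(\boldsymbol{v})$ is small yet multiplied by $\log\rho$, is where the decay hypothesis $g(x)=o_x(1/(x^2\log^2 x))$ is genuinely needed, and establishing $V\to0$ is the technical crux of the theorem.

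For the remaining term $b_3=\sum_i E\left|E\{I_i^T\mid(I_j^T,j\in\Gamma_{w,i})\}-p_i\right|$, I would argue that conditioning on the far, weakly dependent cells alters the probability that node $i$ is isolated only through the event that node $i$ connects to some node \emph{outside} its neighbourhood $D(\boldsymbol{x}_i,r_\rho^{1-\varepsilon})$. The probability of such a long-range connection is at most $\int_{\Re^2\setminus D(\boldsymbol{0},r_\rho^{1-\varepsilon})}\rho g(\frac{\left\Vert \boldsymbol{x}\right\Vert^T}{r_\rho})d\boldsymbol{x}=\frac{\log\rho+b}{C}\int_{r_\rho^{-\varepsilon}}^{\infty}2\pi x g(x)\,dx$, which is exactly the truncation quantity shown to vanish in \eqref{eq:an analysis of the trunction effect} and discussed in Remark~\ref{rem:the need for tighter condition on g}; summing the resulting bound against the $p_i$ shows that $b_3$ is, up to a constant factor, at most $\lambda\cdot\frac{\log\rho+b}{C}\int_{r_\rho^{-\varepsilon}}^{\infty}2\pi x g(x)\,dx\to0$. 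Once $b_1,b_2,b_3\to0$, Theorem~\ref{thm:Chen-Stein Bound} gives $d_{TV}(\mathcal{L}(W^T),Po(\lambda))\to0$, and since Poisson laws are continuous in the mean under $d_{TV}$, combining this with $\lambda\to e^{-b}$ from Lemma~\ref{lem:Expected Isolated nodes torus} yields convergence of the number of isolated nodes to $Po(e^{-b})$. I expect the $b_2$ estimate to be the main obstacle: the global dependence forces the delicate balancing of the exponential correlation factor against $1-g$ and the tail of $g$, which is precisely the difficulty flagged for the generic random connection model.
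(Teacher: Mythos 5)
Your overall skeleton is the paper's own: the same Chen--Stein decomposition, the same neighbourhood scale $r_{\rho}^{1-\epsilon}$ (the paper takes radius $2r_{\rho}^{1-\epsilon}$), the same limit order ($m\rightarrow\infty$ then $\rho\rightarrow\infty$), and your $b_{1}$ argument is essentially identical to the paper's and is fine. But of the two remaining terms, one is left unproved and one rests on a false claim. For $b_{2}$ you correctly reduce to the integral $V$ and isolate the right structural facts ($\psi(\boldsymbol{v})<C$ for $\boldsymbol{v}\neq\boldsymbol{0}$, the mitigating role of the factor $1-g$), but you then declare that establishing $V\rightarrow0$ is ``the technical crux'' without doing it; the paper completes this step via L'H\^{o}pital's rule combined with the quantitative boundary estimate $\int_{\Re^{2}}g\left(\|\boldsymbol{x}\|\right)g\left(\|\boldsymbol{x}-2r_{\rho}^{-\epsilon}\boldsymbol{u}\|\right)d\boldsymbol{x}\leq2Cg\left(r_{\rho}^{-\epsilon}\right)=o_{\rho}\left(r_{\rho}^{2\epsilon}\right)$, i.e. \eqref{eq:intermediate step in the first term in b2} and \eqref{eq:intermediate step 2 first term b2}. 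Your attribution of the hypothesis $g\left(x\right)=o_{x}\left(\frac{1}{x^{2}\log^{2}x}\right)$ to the intermediate range of $b_{2}$ is also misplaced: the paper needs that hypothesis for the truncation effect in computing $E\left(W^{T}\right)$ (Lemma \ref{lem:Expected Isolated nodes torus} and Remark \ref{rem:the need for tighter condition on g}); the $b_{2}$ analysis itself invokes only $g\left(x\right)=o_{x}\left(1/x^{2}\right)$.

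The genuine gap is in $b_{3}$. Your premise that conditioning on $\left(I_{j}^{T},j\in\Gamma_{w,i}\right)$ alters the isolation probability of node $i$ ``only through the event that node $i$ connects to some node outside its neighbourhood'' is wrong. Isolation of a far node $j$ is an event about the \emph{whole} process: it requires that no node anywhere --- including potential Poisson points near $\boldsymbol{x}_{i}$ --- connects to $j$. Hence conditioning on the far isolation pattern $\gamma_{i}$ tilts the law of the configuration near $\boldsymbol{x}_{i}$ as well: the intensity of points threatening the isolation of node $i$ is thinned from $\rho g\left(\frac{\|\boldsymbol{x}-\boldsymbol{x}_{i}\|^{T}}{r_{\rho}}\right)$ to $\rho g\left(\frac{\|\boldsymbol{x}-\boldsymbol{x}_{i}\|^{T}}{r_{\rho}}\right)\prod_{j\in\gamma_{i}}\left(1-g\left(\frac{\|\boldsymbol{x}-\boldsymbol{x}_{j}\|^{T}}{r_{\rho}}\right)\right)$, which is exactly the exponent in \eqref{eq:Conditional value I_i b3}. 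This second channel \emph{raises} the conditional isolation probability, is not captured by any union bound over long-range links emanating from node $i$ itself, and has magnitude of order $n\left(\log\rho+b\right)g\left(r_{\rho}^{-\epsilon}\right)$, where $n=\left|\gamma_{i}\right|$ is the \emph{random} number of far isolated nodes. Making it vanish therefore requires controlling a random prefactor, which is precisely the machinery of the paper's appendix: the sandwich between $B_{L,i}$ and $B_{U,i}$, the inequality $\left(1-x\right)^{n}\geq1-nx$, and Markov's inequality leading to \eqref{eq:asymptotic almost surely expected n}. Your one-line bound $b_{3}\lesssim\lambda\cdot\frac{\log\rho+b}{C}\int_{r_{\rho}^{-\varepsilon}}^{\infty}2\pi xg\left(x\right)dx$ does not follow from the argument you give; the dependence channel it ignores is exactly the ``global dependence structure'' that distinguishes the random connection model from the unit disk model and is the reason this theorem is nontrivial.
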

\begin{proof}
Proof is given in the Appendix.
\end{proof}

\section{The Impact of the Boundary Effects on the Number of Isolated Nodes\label{sec:The-Impact-of-boundary effect}}

On the basis of the analysis in the last section, we now consider
the impact of the boundary effect on the number of isolated nodes
in $\mathcal{G}\left(\mathcal{X}_{\rho},g_{\rho}\right)$. Following
the same procedure that results in \eqref{eq:prob isolated node},
it can be shown that $Pr\left(I_{i}^{S}=1\right)\sim_{m}\frac{\rho}{m^{2}}e^{-\int_{A}\rho g\left(\left\Vert \frac{\boldsymbol{x}-\boldsymbol{x}_{i}}{r_{\rho}}\right\Vert \right)d\boldsymbol{x}}$
where the parameters in this section is defined analogously as those
in the last section. Note that due to the consideration of a square,
a relationship such as $\int_{A}\rho g\left(\left\Vert \frac{\boldsymbol{x}-\boldsymbol{x}_{i}}{r_{\rho}}\right\Vert \right)d\boldsymbol{x}=\int_{A}\rho g\left(\left\Vert \frac{\boldsymbol{x}}{r_{\rho}}\right\Vert \right)d\boldsymbol{x}$
is no longer valid. It follows that \begin{eqnarray*}
E\left(W^{S}\right) & = & \lim_{m\rightarrow\infty}E\left(W_{m}^{s}\right)=\int_{A}\rho e^{-\int_{A}\rho g\left(\left\Vert \frac{\boldsymbol{x}-\boldsymbol{y}}{r_{\rho}}\right\Vert \right)d\boldsymbol{x}}d\boldsymbol{y}\end{eqnarray*}
\begin{eqnarray}
\lim_{\rho\rightarrow\infty}E\left(W^{S}\right) & = & \lim_{\rho\rightarrow\infty}\int_{A_{\rho}}\rho r_{\rho}^{2}e^{-\int_{A_{\rho}}\rho r_{\rho}^{2}g\left(\left\Vert \boldsymbol{x}-\boldsymbol{y}\right\Vert \right)d\boldsymbol{x}}d\boldsymbol{y}\nonumber \\
 & = & \lim_{\rho\rightarrow\infty}\rho e^{-C\rho r_{\rho}^{2}}=e^{-b}\label{eq: asymptotic expected number of isolated nodes square}\end{eqnarray}
where $A_{\rho}$ is a square of size $\frac{1}{r_{\rho}^{2}}$ and
$A_{\rho}\triangleq\left[-\frac{1}{2r_{\rho}},\frac{1}{2r_{\rho}}\right)^{2}$.
In arriving at \eqref{eq: asymptotic expected number of isolated nodes square}
some discussions involving dividing $A_{\rho}$ into three non-overlapping
regions: four square areas of size $r_{\rho}^{-\varepsilon}\times r_{\rho}^{-\varepsilon}$
at the corners of $A_{\rho}$, denoted by $\angle A_{\rho}$; four
rectangular areas of size $r_{\rho}^{-\varepsilon}\times\left(r_{\rho}^{-1}-2r_{\rho}^{-\varepsilon}\right)$
adjacent to the four sides of $A_{\rho}$, denoted by $\ell A_{\rho}$;
and the rest central area, are omitted due to space limitation, where
$\varepsilon$ is a small positive constant and $\varepsilon<\frac{1}{4}$. 

Comparing \eqref{eq:Expected number of isolated nodes asymptotic}
and \eqref{eq: asymptotic expected number of isolated nodes square},
it is noted that the expected numbers of isolated nodes on a torus
and on a square respectively asymptotically converge to the same nonzero
finite \emph{constant} $e^{-b}$ as $\rho\rightarrow\infty$. Now
we use the coupling technique \cite{Franceschetti07Random} to construct
the connection between $W^{S}$ and $W^{T}$. Consider an instance
of $\mathcal{G}^{T}\left(\mathcal{X}_{\rho},g_{\rho}\right)$. The
number of isolated nodes in that network is $W^{T}$, which depends
on $\rho$. Remove each connection of the above network with probability
$1-\frac{g\left(\frac{x}{r_{\rho}}\right)}{g\left(\frac{x^{T}}{r_{\rho}}\right)}$,
independent of the event that another connection is removed, where
$x$ is the Euclidean distance between the two endpoints of the connection
and $x^{T}$ is the corresponding toroidal distance. Due to \eqref{eq:property of toroidal distance 1}
and the non-increasing property of $g$, $0\leq1-\frac{g\left(\frac{x}{r_{\rho}}\right)}{g\left(\frac{x^{T}}{r_{\rho}}\right)}\leq1$.
Further note that only connections between nodes near the boundary
with $x^{T}<x$ will be affected. Denote the number of \emph{newly
}appeared isolated nodes by $W^{E}$. $W^{E}$ has the meaning of
being \emph{the number of isolated nodes due to the boundary effect}.
It is straightforward to show that $W^{E}$ is a non-negative random
integer, depending on $\rho$. Further, such a connection removal
process results in a random network with nodes Poissonly distributed
with density $\rho$ where a pair of nodes separated by an \emph{Euclidean}
distance $x$ are directly connected with probability $g\left(\frac{x}{r_{\rho}}\right)$,
i.e. a random network on a square with the boundary effect included.
The following equation results as a consequence of the above discussion:
$W^{S}=W^{E}+W^{T}$. Using \eqref{eq:Expected number of isolated nodes asymptotic},
\eqref{eq: asymptotic expected number of isolated nodes square} and
the above equation, it can be shown that $\lim_{\rho\rightarrow\infty}E\left(W^{E}\right)=0$.
Due to the non-negativeness of $W^{E}$: $\lim_{\rho\rightarrow\infty}\Pr\left(W^{E}=0\right)=1$.
The above discussion is summarized in the following lemma, which forms
the second major contribution of this paper.
\begin{thm}
\label{thm:Isolated nodes due to boundary effect}The number of isolated
nodes in $\mathcal{G}\left(\mathcal{X}_{\rho},g_{\rho}\right)$ due
to the boundary effect is a.a.s. $0$ as $\rho\rightarrow\infty$.
\end{thm}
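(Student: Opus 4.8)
The plan is to reduce the claim to a first-moment estimate on the \emph{additional} isolated nodes created by the boundary, and then exploit the non-negativity and integer-valuedness of that count. Two ingredients are already available: Lemma~\ref{lem:Expected Isolated nodes torus} gives $\lim_{\rho\rightarrow\infty}E(W^{T})=e^{-b}$ on the torus, and \eqref{eq: asymptotic expected number of isolated nodes square} gives $\lim_{\rho\rightarrow\infty}E(W^{S})=e^{-b}$ on the square. The key device linking the two graphs on a single probability space is the coupling sketched in the preceding paragraph.

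First I would make the coupling precise. Starting from a realization of $\mathcal{G}^{T}(\mathcal{X}_{\rho},g_{\rho})$ on the common Poisson point set, delete each present edge whose endpoints are at Euclidean distance $x$ (and toroidal distance $x^{T}$) independently with probability $1-g(x/r_{\rho})/g(x^{T}/r_{\rho})$. I must check this is a legitimate probability: since $x^{T}\le x$ by \eqref{eq:property of toroidal distance 1} and $g$ is non-increasing, $g(x/r_{\rho})\le g(x^{T}/r_{\rho})$, so the ratio lies in $[0,1]$. A short computation then shows the surviving-edge probability for a pair at Euclidean distance $x$ equals $g(x/r_{\rho})$, so the thinned graph has exactly the law of the square model with Euclidean connection probabilities; moreover only edges with $x^{T}<x$, i.e.\ those wrapping around the torus near the boundary, are affected.

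Next I would define $W^{E}$ as the number of nodes that are non-isolated in the torus graph but become isolated after thinning. Because deleting edges can only create, never destroy, isolated nodes, one has $W^{E}\ge 0$ and the pointwise identity $W^{S}=W^{T}+W^{E}$ on the coupled space. Taking expectations and using the two limits above gives $\lim_{\rho\rightarrow\infty}E(W^{E})=\lim_{\rho\rightarrow\infty}\left(E(W^{S})-E(W^{T})\right)=e^{-b}-e^{-b}=0$. Since $W^{E}$ is a non-negative integer-valued random variable, Markov's inequality yields $\Pr(W^{E}\ge 1)\le E(W^{E})\rightarrow 0$, hence $\Pr(W^{E}=0)\rightarrow 1$, which is the assertion.

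I expect the main obstacle to be the coupling step rather than the final moment argument. One must verify both that the independent thinning exactly reproduces the square-model law (this rests on the ratio being a valid probability, which in turn needs the monotonicity of $g$ together with $x^{T}\le x$) and that $W^{S}=W^{T}+W^{E}$ holds as a pointwise identity. The latter is clean here precisely because the coupling only removes edges and never adds them, so no node isolated on the torus can become non-isolated on the square. Everything downstream is the routine first-moment method; the genuine analytic content has already been discharged in establishing the two expected-count asymptotics.
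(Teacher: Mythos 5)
Your proposal is correct and follows essentially the same route as the paper: the identical coupling (independent edge removal with probability $1-g\left(\frac{x}{r_{\rho}}\right)/g\left(\frac{x^{T}}{r_{\rho}}\right)$, justified by \eqref{eq:property of toroidal distance 1} and the monotonicity of $g$), the same pointwise decomposition $W^{S}=W^{T}+W^{E}$, and the same first-moment/non-negativity conclusion via the two limits \eqref{eq:Expected number of isolated nodes asymptotic} and \eqref{eq: asymptotic expected number of isolated nodes square}. The only cosmetic difference is that you invoke Markov's inequality explicitly where the paper simply appeals to the non-negativity of the integer-valued $W^{E}$.
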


\section{The Necessary Condition for Asymptotically Connected Networks\label{sec:The-Necessary-Condition}}

We are now ready to present the necessary condition for $\mathcal{G}\left(\mathcal{X}_{\rho},g_{\rho}\right)$
to be a.a.s. connected as $\rho\rightarrow\infty$. The following
theorem can be obtained using Theorems \ref{thm:Poison convergence of isolated nodes}
and \ref{thm:Isolated nodes due to boundary effect}:
\begin{thm}
\label{thm:Number of isolated nodes square}The number of isolated
nodes in $\mathcal{G}\left(\mathcal{X}_{\rho},g_{\rho}\right)$ converges
to a Poisson distribution with mean $e^{-b}$ as $\rho\rightarrow\infty$.
\end{thm}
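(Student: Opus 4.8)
The plan is to derive this square-network result directly from the two torus-based theorems already in hand, using the coupling constructed in Section \ref{sec:The-Impact-of-boundary effect}; no fresh estimate of the Chen--Stein type is needed. Recall that the coupling realizes the three counts $W^{T}$, $W^{E}$ and $W^{S}$ on a single probability space and yields the identity $W^{S}=W^{E}+W^{T}$, in which $W^{E}\geq 0$ is the integer-valued count of isolated nodes created when the superfluous toroidal connections are thinned away to produce the square model. The idea is simply to show that $W^{S}$ and $W^{T}$ agree with probability tending to one, and then to transfer the Poisson convergence of $W^{T}$ across to $W^{S}$.

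First I would note that, because $W^{E}$ is a non-negative integer, $W^{S}$ and $W^{T}$ coincide exactly on $\left\{ W^{E}=0\right\}$, so that $\left\{ W^{S}\neq W^{T}\right\} \subseteq\left\{ W^{E}\geq 1\right\}$ and
\[
\Pr\left(W^{S}\neq W^{T}\right)\leq\Pr\left(W^{E}\geq 1\right)=1-\Pr\left(W^{E}=0\right)\rightarrow 0,
\]
the limit being exactly the content of Theorem \ref{thm:Isolated nodes due to boundary effect}. Next I would invoke the elementary coupling bound, namely that for two integer-valued variables on a common probability space the total variation distance between their laws is at most the probability that they differ; this gives
\[
d_{TV}\left(\mathcal{L}\left(W^{S}\right),\mathcal{L}\left(W^{T}\right)\right)\leq\Pr\left(W^{S}\neq W^{T}\right)\rightarrow 0.
\]

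It then remains only to combine this with Theorem \ref{thm:Poison convergence of isolated nodes}, which (through its Chen--Stein proof) supplies $d_{TV}\left(\mathcal{L}\left(W^{T}\right),Po\left(e^{-b}\right)\right)\rightarrow 0$. The triangle inequality for the total variation distance then yields
\begin{eqnarray*}
 &  & d_{TV}\left(\mathcal{L}\left(W^{S}\right),Po\left(e^{-b}\right)\right)\\
 & \leq & d_{TV}\left(\mathcal{L}\left(W^{S}\right),\mathcal{L}\left(W^{T}\right)\right)+d_{TV}\left(\mathcal{L}\left(W^{T}\right),Po\left(e^{-b}\right)\right)\\
 & \rightarrow & 0,
\end{eqnarray*}
establishing the stated Poisson convergence, and indeed in the stronger sense of convergence in total variation.

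Because every quantitative estimate has been discharged in Theorems \ref{thm:Poison convergence of isolated nodes} and \ref{thm:Isolated nodes due to boundary effect}, I do not anticipate a genuine obstacle here; the proof is a short coupling-plus-triangle-inequality argument. The single point requiring care is the validity of the coupling itself: one must be sure that thinning connections can only \emph{create} isolated nodes and never remove an existing one, which is what guarantees $W^{E}\geq 0$ and hence the containment $\left\{ W^{S}\neq W^{T}\right\} \subseteq\left\{ W^{E}\geq 1\right\}$. A completely equivalent alternative would avoid total variation altogether and appeal to Slutsky's theorem: the convergence in distribution of $W^{T}$ to $Po\left(e^{-b}\right)$ together with $W^{E}\rightarrow 0$ in probability (immediate from $\Pr\left(W^{E}=0\right)\rightarrow 1$) forces $W^{S}=W^{E}+W^{T}$ to converge in distribution to $Po\left(e^{-b}\right)$; I would nonetheless prefer the total variation route, since it matches the form of the two input theorems and delivers the sharper mode of convergence.
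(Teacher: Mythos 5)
Your proposal is correct and takes essentially the same route as the paper: Theorem \ref{thm:Number of isolated nodes square} is obtained there directly from Theorem \ref{thm:Poison convergence of isolated nodes} and Theorem \ref{thm:Isolated nodes due to boundary effect} via the coupling identity $W^{S}=W^{E}+W^{T}$, exactly as you argue. The paper states this derivation in one line without elaboration, so your coupling inequality $d_{TV}\left(\mathcal{L}\left(W^{S}\right),\mathcal{L}\left(W^{T}\right)\right)\leq\Pr\left(W^{S}\neq W^{T}\right)\leq\Pr\left(W^{E}\geq1\right)$ together with the triangle inequality (or, equivalently, Slutsky's theorem) simply makes explicit the routine details the authors left implicit.
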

Corollary \ref{cor:Prob no isolated node} follows immediately from
Theorem \ref{thm:Number of isolated nodes square}.
\begin{cor}
\label{cor:Prob no isolated node}As $\rho\rightarrow\infty$, the
probability that there is no isolated node in $\mathcal{G}\left(\mathcal{X}_{\rho},g_{\rho}\right)$
converges to $e^{-e^{-b}}$.
\end{cor}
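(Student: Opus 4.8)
The plan is to combine Theorems \ref{thm:Poison convergence of isolated nodes} and \ref{thm:Isolated nodes due to boundary effect} through the coupling identity $W^{S}=W^{E}+W^{T}$ established in Section \ref{sec:The-Impact-of-boundary effect}. Here $W^{T}$ is the number of isolated nodes on the torus, $W^{E}\geq0$ counts the additional isolated nodes created when the boundary effect is restored, and $W^{S}$ is the number of isolated nodes of $\mathcal{G}\left(\mathcal{X}_{\rho},g_{\rho}\right)$ on the square. Since these three quantities are non-negative integer valued and live on a common probability space with $W^{S}$ equal to the sum of the other two, the whole task reduces to showing that adding the a.a.s.\ vanishing quantity $W^{E}$ to $W^{T}$ leaves the limiting Poisson law undisturbed.

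First I would fix an arbitrary non-negative integer $k$ and decompose
\[
\Pr\left(W^{S}=k\right)=\Pr\left(W^{T}=k,\,W^{E}=0\right)+\Pr\left(W^{S}=k,\,W^{E}\geq1\right),
\]
using that $W^{S}=W^{T}$ on the event $\left\{ W^{E}=0\right\}$. The second term is at most $\Pr\left(W^{E}\geq1\right)$, which tends to $0$ by Theorem \ref{thm:Isolated nodes due to boundary effect}. For the first term I would write $\Pr\left(W^{T}=k,\,W^{E}=0\right)=\Pr\left(W^{T}=k\right)-\Pr\left(W^{T}=k,\,W^{E}\geq1\right)$, where the subtracted probability is again bounded by $\Pr\left(W^{E}\geq1\right)\to0$. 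Combining the two estimates gives $\Pr\left(W^{S}=k\right)=\Pr\left(W^{T}=k\right)+o_{\rho}\left(1\right)$.

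Then I would invoke Theorem \ref{thm:Poison convergence of isolated nodes}, which gives $\lim_{\rho\rightarrow\infty}\Pr\left(W^{T}=k\right)=e^{-e^{-b}}\left(e^{-b}\right)^{k}/k!$, the probability mass function of $Po\left(e^{-b}\right)$. Passing to the limit $\rho\rightarrow\infty$ in the previous display yields the same value for $\lim_{\rho\rightarrow\infty}\Pr\left(W^{S}=k\right)$, and since this holds for every $k$ it is exactly the statement that $W^{S}$ converges in distribution to $Po\left(e^{-b}\right)$. Equivalently, since $W^{E}\rightarrow0$ in probability and $W^{T}$ converges in distribution to $Po\left(e^{-b}\right)$, the claim follows at once from Slutsky's theorem.

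I expect no serious obstacle in this argument, because the substantive analysis has already been carried out in the two cited theorems and in the coupling construction of Section \ref{sec:The-Impact-of-boundary effect}; what remains is the routine principle that a perturbation which is a.a.s.\ zero cannot change a limiting law. The only point needing a little care is that convergence to a Poisson distribution should be verified as pointwise convergence of the probability mass function over the discrete set of non-negative integers, so the decomposition above must be run separately for each fixed $k$; this is legitimate precisely because $W^{S}$, $W^{T}$ and $W^{E}$ are all integer valued and non-negative.
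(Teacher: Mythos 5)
Your proposal is correct and takes essentially the same route as the paper: the paper obtains Theorem \ref{thm:Number of isolated nodes square} (Poisson convergence of $W^{S}$) by combining Theorem \ref{thm:Poison convergence of isolated nodes} and Theorem \ref{thm:Isolated nodes due to boundary effect} through the coupling identity $W^{S}=W^{E}+W^{T}$, exactly as you do, and then reads off the corollary as the $k=0$ value of the $Po\left(e^{-b}\right)$ mass function, $e^{-e^{-b}}$. The only difference is presentational---you spell out the pointwise pmf decomposition (equivalently, the Slutsky argument) that the paper compresses into ``follows immediately.''
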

With a slight modification of the proof of Theorem \ref{thm:Poison convergence of isolated nodes},
it can be shown that Theorems \ref{thm:Poison convergence of isolated nodes}
and \ref{thm:Number of isolated nodes square} and Corollary \ref{cor:Prob no isolated node}
can be extended to the situation when $b$ is a function of $\rho$
and $\lim_{\rho\rightarrow\infty}b=B$, where $B$ is a constant.
Now we further relax the condition in Theorem \ref{thm:Poison convergence of isolated nodes}
on $b$ and consider the situation when $b\rightarrow-\infty$ or
$b\rightarrow\infty$ as $\rho\rightarrow\infty$. When $b\rightarrow\infty$,
the number of connections in $\mathcal{G}\left(\mathcal{X}_{\rho},g_{\rho}\right)$
increases. Unsurprisingly, isolated nodes disappear. In fact, using
the coupling technique, Lemma \ref{lem:Expected Isolated nodes torus},
Theorem \ref{thm:Isolated nodes due to boundary effect} and Markov's
inequality, it can be shown that if $b\rightarrow\infty$ as $\rho\rightarrow\infty$,
$\lim_{\rho\rightarrow\infty}\Pr\left(W^{S}=0\right)=1$.

Now we consider the situation when $b\rightarrow-\infty$ as $\rho\rightarrow\infty$.
For an arbitrary network, a particular property is \emph{increasing}
if the property is preserved when more connections (edges) are added
into the network. A property is \emph{decreasing} if its complement
is increasing, or equivalently a decreasing property is preserved
when connections (edges) are removed from the network. It follows
that the property that the network $\mathcal{G}\left(\mathcal{X}_{\rho},g_{\rho}\right)$
has at least one isolated node, denoted by $\Lambda$, is a \emph{decreasing}
property. The complement of $\Lambda$, denoted by $\Lambda^{c}$,
viz. the property that the network $\mathcal{G}\left(\mathcal{X}_{\rho},g_{\rho}\right)$
has no isolated node, is an increasing property. In fact the network
$\mathcal{G}_{1}\left(\mathcal{X}_{\rho},g_{\rho}\right)$ where $b=B_{1}$
can be obtained from the network $\mathcal{G}_{2}\left(\mathcal{X}_{\rho},g_{\rho}\right)$
where $b=B_{2}$ and $B_{2}<B_{1}$ by removing each connection in
$\mathcal{G}_{1}\left(\mathcal{X}_{\rho},g_{\rho}\right)$ independently
with a probability $g\left(\frac{x}{\sqrt{\frac{\log\rho+B_{2}}{C\rho}}}\right)/g\left(\frac{x}{\sqrt{\frac{\log\rho+B_{1}}{C\rho}}}\right)$
with $x$ being the  distance between two endpoints of the connection.
The above observations, together with Corollary \ref{cor:Prob no isolated node},
lead to the conclusion that if $b\rightarrow-\infty$ as $\rho\rightarrow\infty$,
\[
\lim_{\rho\rightarrow\infty}Pr\left(\Lambda\right)=\lim_{\rho\rightarrow\infty}1-Pr\left(\Lambda^{c}\right)=1\]

The above discussions are summarized in the following theorem and
corollary, which form the third major contribution of this paper:
\begin{thm}
\label{thm:isolated nodes b=00003D0 or infinity}In the network $\mathcal{G}\left(\mathcal{X}_{\rho},g_{\rho}\right)$,
if $b\rightarrow\infty$ as $\rho\rightarrow\infty$, a.a.s. there
is no isolated node in the network; if $b\rightarrow-\infty$ as $\rho\rightarrow\infty$,
a.a.s. the network has at least one isolated nodes.\end{thm}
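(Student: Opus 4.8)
The plan is to reduce both divergent regimes to the constant-parameter statement of Corollary \ref{cor:Prob no isolated node} by sandwiching the network whose parameter $b=b(\rho)$ now diverges between two networks with a \emph{fixed} constant parameter, exploiting the monotone coupling already described in the text. The two structural facts I would lean on are: (i) the event $\Lambda^{c}$, that the network has \emph{no} isolated node, is an \emph{increasing} property, since adding edges can only remove isolated nodes and never create them; and (ii) for $B'<B$ the network with parameter $B'$ is obtained from the one with parameter $B$ by independent edge thinning, retaining each edge between endpoints at Euclidean distance $x$ with probability $g\left(x/\sqrt{(\log\rho+B')/(C\rho)}\right)/g\left(x/\sqrt{(\log\rho+B)/(C\rho)}\right)\in[0,1]$, the upper bound $1$ being exactly the non-increasing monotonicity of $g$ together with $r_{\rho}$ being increasing in $b$. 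Realised on a common probability space, this gives the monotone containment: if $b_{1}\le b_{2}$ then the edge set of $\mathcal{G}\left(\mathcal{X}_{\rho},g_{\rho}\right)$ with $b=b_{1}$ is contained in that with $b=b_{2}$, and $\mathbf{1}[\Lambda^{c}]$ is monotone nondecreasing in the edge set.

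For the case $b\rightarrow\infty$ I would fix an arbitrary constant $B$ and note that $b(\rho)\ge B$ for all $\rho$ large enough, so under the coupling the $b(\rho)$-network is a supergraph of the fixed-$B$ network. Monotonicity of $\Lambda^{c}$ then yields $\Pr\left(\Lambda^{c}\mid b=b(\rho)\right)\ge\Pr\left(\Lambda^{c}\mid b=B\right)$, and applying Corollary \ref{cor:Prob no isolated node} to the fixed $B$ gives $\liminf_{\rho\rightarrow\infty}\Pr\left(\Lambda^{c}\mid b=b(\rho)\right)\ge e^{-e^{-B}}$. Since $B$ is arbitrary I then send $B\rightarrow\infty$, whereupon $e^{-e^{-B}}\rightarrow1$, forcing the probability of no isolated node to tend to $1$. (An equivalent route, closer to the sketch in the text, is to observe that pathwise under the coupling the number of isolated nodes in the $b(\rho)$-network is at most that in the $B$-network, whence $E(W^{S})\le e^{-B}+o_{\rho}(1)$ by the first-moment evaluation $\lim_{\rho\rightarrow\infty}E(W^{S})=e^{-b}$ of Lemma \ref{lem:Expected Isolated nodes torus} and Theorem \ref{thm:Isolated nodes due to boundary effect}, and then Markov's inequality $\Pr(W^{S}\ge1)\le E(W^{S})$ finishes it after $B\rightarrow\infty$.)

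The case $b\rightarrow-\infty$ is the mirror image. Fixing an arbitrary constant $B$, I would use that $b(\rho)\le B$ eventually, so now the $b(\rho)$-network is a \emph{subgraph} of the fixed-$B$ network. Monotonicity of the increasing property $\Lambda^{c}$ gives $\Pr\left(\Lambda^{c}\mid b=b(\rho)\right)\le\Pr\left(\Lambda^{c}\mid b=B\right)$, and Corollary \ref{cor:Prob no isolated node} then yields $\limsup_{\rho\rightarrow\infty}\Pr\left(\Lambda^{c}\mid b=b(\rho)\right)\le e^{-e^{-B}}$. Letting $B\rightarrow-\infty$ drives $e^{-e^{-B}}\rightarrow0$, so $\Pr\left(\Lambda^{c}\right)\rightarrow0$ and hence $\Pr\left(\Lambda\right)\rightarrow1$, i.e.\ a.a.s.\ the network has at least one isolated node.

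The step I expect to be the main obstacle is the rigorous justification of the monotone coupling together with the attendant interchange of the two limits. One must verify that the retention probabilities are genuine probabilities in $[0,1]$ (handling the degenerate pairs where $g$ vanishes, for which the edge is simply absent in both networks), that the thinning reproduces \emph{exactly} the marginal connection law $g\left(x/r_{\rho}\right)$ of the smaller-$b$ network, and that the double limit ``fix $B$, send $\rho\to\infty$, then send $B\to\pm\infty$'' is legitimate. The last point is where the whole argument hinges, and it is exactly Corollary \ref{cor:Prob no isolated node} that licenses it: for \emph{each} fixed $B$ it supplies a clean limit $e^{-e^{-B}}$ that is monotone in $B$ with the required one-sided extremes $1$ and $0$. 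Everything else is bookkeeping layered on top of the already-established constant-$b$ results.
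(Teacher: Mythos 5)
Your proposal is correct and takes essentially the same route as the paper: the paper likewise couples the divergent-$b$ network to a fixed-constant-$b$ network via independent edge thinning with retention probability given by the ratio of the two connection functions, uses that having no isolated node is an increasing property, and then invokes the constant-$b$ results (Corollary \ref{cor:Prob no isolated node} for $b\rightarrow-\infty$; Lemma \ref{lem:Expected Isolated nodes torus}, Theorem \ref{thm:Isolated nodes due to boundary effect} and Markov's inequality for $b\rightarrow\infty$, which is exactly the alternative route you flag). The only cosmetic difference is that you run the Corollary-based sandwich argument symmetrically in both directions, which is a slightly more unified write-up than the paper's sketch but not a different proof.
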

\begin{cor}
\label{cor:necessary condition for asymptotically connected network}$b\rightarrow\infty$
is a necessary condition for the network $\mathcal{G}\left(\mathcal{X}_{\rho},g_{\rho}\right)$
to be a.a.s. connected as $\rho\rightarrow\infty$.
\end{cor}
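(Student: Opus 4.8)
The plan is to prove the contrapositive: if $b$ does not diverge to $+\infty$ as $\rho\rightarrow\infty$, then $\mathcal{G}\left(\mathcal{X}_{\rho},g_{\rho}\right)$ fails to be a.a.s. connected. The one structural fact I would lean on is that a network possessing an isolated node cannot be connected, so that for every $\rho$
\[
\Pr\left(\mathcal{G}\left(\mathcal{X}_{\rho},g_{\rho}\right)\text{ connected}\right)\le\Pr\left(\text{no isolated node in }\mathcal{G}\left(\mathcal{X}_{\rho},g_{\rho}\right)\right).
\]
Hence a.a.s. connectivity forces the right-hand side to tend to $1$, and it suffices to exhibit a failure of this whenever $b\not\rightarrow\infty$.

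First I would dispose of the divergent regime $b\rightarrow-\infty$: Theorem \ref{thm:isolated nodes b=00003D0 or infinity} already gives that the network then has at least one isolated node a.a.s., so $\Pr\left(\text{no isolated node}\right)\rightarrow0$ and connectivity cannot hold a.a.s. The remaining, and more delicate, case is when $b$ merely stays bounded away from $+\infty$. Here I would extract a subsequence $\rho_{k}\rightarrow\infty$ along which $b\le M$ for some finite constant $M$, and then invoke the monotone coupling already introduced before Theorem \ref{thm:isolated nodes b=00003D0 or infinity} (and in Section \ref{sec:The-Impact-of-boundary effect}): because $g$ is non-increasing and $r_{\rho}$ increases with $b$, the network at parameter $b$ is obtained from the network at the larger parameter $M$ by independent edge deletions, and deleting edges can only create, never destroy, isolated nodes. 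Consequently the event "no isolated node at parameter $b$'' is contained in the event "no isolated node at parameter $M$'', so
\[
\Pr\left(\text{no isolated node at }b(\rho_{k})\right)\le\Pr\left(\text{no isolated node at }M\right).
\]

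Applying the extension of Corollary \ref{cor:Prob no isolated node} to the (constant) limiting value $b\equiv M$ along $\rho_{k}$ yields $\Pr\left(\text{no isolated node at }M\right)\rightarrow e^{-e^{-M}}<1$, whence $\limsup_{k}\Pr\left(\text{no isolated node at }\rho_{k}\right)\le e^{-e^{-M}}<1$. By the opening inequality the connectivity probability therefore does not tend to $1$, and combining this with the $b\rightarrow-\infty$ case completes the contrapositive.

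The main obstacle I anticipate is not any single computation but two points of care. The first is orienting the coupling correctly: the comparison network must carry the \emph{larger} parameter $M$, so that the network of interest arises by edge \emph{deletion} (which is what can only produce isolated nodes); reversing this would invert the inequality. The second is that the full negation of $b\rightarrow\infty$ is broader than the three clean limiting regimes $b\rightarrow B$, $b\rightarrow\pm\infty$ treated in Theorem \ref{thm:isolated nodes b=00003D0 or infinity}, so the passage to a subsequence on which $b$ is bounded is essential to reduce the general case to a fixed constant $M$. Once the coupling is oriented correctly, the strict bound $e^{-e^{-M}}<1$ does the remaining work.
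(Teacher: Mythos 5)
Your proposal is correct, and its core ingredients are the ones the paper relies on: an isolated node obstructs connectivity, Corollary \ref{cor:Prob no isolated node} pins the no-isolated-node probability at $e^{-e^{-M}}<1$ for a constant parameter $M$, and the monotone edge-deletion coupling (the one described just before Theorem \ref{thm:isolated nodes b=00003D0 or infinity}) transfers that bound downward; your orientation of the coupling is the right one, since deleting edges can only create isolated nodes, so the event of having no isolated node at the sparser parameter $b\le M$ is contained in the corresponding event at the denser parameter $M$. Where you genuinely add something: the paper does not actually write a proof of this corollary --- it presents it as a ``summary'' of the regimes $b\rightarrow B$, $b\rightarrow+\infty$ and $b\rightarrow-\infty$, and, as you correctly point out, those three regimes do not exhaust the negation of $b\rightarrow\infty$ (the parameter $b(\rho)$ may remain bounded on a subsequence while oscillating or even drifting to $+\infty$ along another subsequence). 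Your extraction of a subsequence $\rho_{k}$ with $b(\rho_{k})\le M$, followed by the coupling against the constant-$M$ network and an application of Corollary \ref{cor:Prob no isolated node} along that subsequence, is precisely the missing step that makes the necessity claim airtight, so your argument is more complete than the paper's. Two cosmetic remarks: the paper's description of the coupling swaps the roles of $\mathcal{G}_{1}$ and $\mathcal{G}_{2}$ (the stated removal probability is really the retention probability, applied to the denser network), but the coupling you use is the intended one; and the inequality $\Pr(\textrm{connected})\le\Pr(\textrm{no isolated node})$ technically fails on the event that the Poisson process deposits at most one node in the square, which has probability $(1+\rho)e^{-\rho}\rightarrow0$ and so does not affect the limit --- worth a sentence if you write this up formally.
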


\section{Related Work\label{sec:Related-Work}}

Extensive research has been done on connectivity problems using the
well-known random geometric graph and the \emph{unit disk model},
which is usually obtained by randomly and uniformly distributing $n$
vertices in a given area and connecting any two vertices iff their
 distance is smaller than or equal to a given threshold $r(n)$ \cite{Penrose99On,Penrose03Random}.
Significant outcomes have been achieved for both asymptotically infinite
$n$ \cite{Gupta98Critical,Xue04The,Philips89Connectivity,Ravelomanana04Extremal,Balister05Connectivity,Wan04Asymptotic,Penrose03Random,Balister09A}
and finite $n$ \cite{Bettstetter04On,Bettstetter02On,Tang03An}.
Specifically, it was shown that under the unit disk model and in $\Re^{2}$,
the above network with $r\left(n\right)=\sqrt{\frac{\log n+c\left(n\right)}{\pi n}}$
is a.a.s. connected as $n\rightarrow\infty$ iff $c\left(n\right)\rightarrow\infty$
. In \cite{Ravelomanana04Extremal}, Ravelomanana investigated the
critical transmission range for connectivity in 3-dimensional wireless
sensor networks and derived similar results as the 2-dimensional results
in \cite{Gupta98Critical}. Note that most of the results for finite
$n$ are empirical results. 

In \cite{Hekmat06Connectivity,Orriss03Probability,Miorandi05Coverage,Miorandi08The,Bettstetter04failure,Bettstetter05Connectivity}
the necessary condition for the above network to be asymptotically
connected is investigated under the more realistic \emph{log-normal
connection model}. Under the log-normal connection model, two nodes
are directly connected if the received power at one node from the
other node, whose attenuation follows the log-normal model, is greater
than a given threshold. These analysis however all relies on the \emph{assumption}
that the event that a node is isolated and the event that another
node is isolated are independent. Realistically however, one may expect
the above two events to be correlated whenever there is a non-zero
probability that a third node may exist which may have direct connections
to both nodes. In the unit disk model, this may happen when the transmission
range of the two nodes overlaps. In the log-normal model, \emph{any}
node may have a non-zero probability of having direct connections
to both nodes. This observation and the lack of rigorous analysis
on the node isolation events to support the independence assumption
raised a question mark over the validity of the results of \cite{Hekmat06Connectivity,Orriss03Probability,Miorandi05Coverage,Miorandi08The,Bettstetter04failure,Bettstetter05Connectivity}. 

The results in this paper complement the above studies in two ways.
They provide the asymptotic distribution of the number of isolated
nodes in the network, which is valid not only for the unit disk model
and the log-normal connection model but also for the more generic
random connection model. Second they do \emph{not} depend on the independence
assumption concerning isolated nodes just mentioned. In fact, it is
an unjustifiable assumption. They do however rely on the independence
of connections of different node pairs, referred to in the discussion
of the random connection model in Section \ref{sec:Introduction}. 

Some work exists on the analysis of the asymptotic distribution of
the number of isolated nodes \cite{Yi06Asymptotic,Franceschetti06Critical,Franceschetti07Random,Penrose03Random}
under the assumption of a unit disk model. In \cite{Yi06Asymptotic},
Yi et al. considered a total of $n$ nodes distributed independently
and uniformly in a unit-area disk. Using some complicated geometric
analysis, they showed that if all nodes have a maximum transmission
range $r(n)=\sqrt{\left(\log n+\xi\right)/\pi n}$ for some constant
$\xi$, the total number of isolated nodes is asymptotically Poissonly
distributed with mean $e^{-\xi}$. In \cite{Franceschetti06Critical,Franceschetti07Random},
Franceschetti et al. derived essentially the same result using the
Chen-Stein technique. A similar result can also be found in \cite{Penrose03Random}
in a continuum percolation setting. There is a major challenge in
analyzing the distribution of the number of isolated nodes \emph{under
the random connection model}; under the unit disk model, the dependence
structure is {}``local'', i.e. the event that a node is isolated
and the event that another node is isolated are dependent iff the
distance between the two nodes is smaller than twice the transmission
range, whereas under the random connection model, the dependence structure
becomes {}``global'', i.e. the above two events are dependent even
if the two nodes are far away.

\section{Conclusions and Further Work\label{sec:Conclusions-and-Further}}

In this paper, we analyzed the asymptotic distribution of the number
of isolated nodes in $\mathcal{G}\left(\mathcal{X}_{\rho},g_{\rho}\right)$
using the Chen-Stein technique, the impact of the boundary effect
on the number of isolated nodes and on that basis the necessary condition
for $\mathcal{G}\left(\mathcal{X}_{\rho},g_{\rho}\right)$ to be a.a.s.
connected as $\rho\rightarrow\infty$. Considering one instance of
such a network and expanding the distances between all pairs of nodes
by a factor of $1/r_{\rho}$ while maintaining their connections,
there results a random network with nodes Poissonly distributed on
a square of size $1/r_{\rho}^{2}$ with density $\rho r_{\rho}^{2}$
where a pair of nodes separated by an Euclidean distance $x$ are
directly connected with probability $g\left(x\right)$. Using the
scaling technique \cite{Franceschetti07Random}, it can be readily
shown that our result applies to this random network. By proper scaling
or slight modifications of the proof of Theorem \ref{thm:Poison convergence of isolated nodes},
our result can be extended to networks of other sizes.

It can be easily shown that as $\rho\rightarrow\infty$, the average
node degree in $\mathcal{G}\left(\mathcal{X}_{\rho},g_{\rho}\right)$
converges to $\log\rho+b$. That is, the average node degree under
the random connection model increases at the same rate as the average
node degree required for a connected network under the unit disk model
as $\rho\rightarrow\infty$ \cite{Philips89Connectivity}. Further
if $b\rightarrow\infty$ as $\rho\rightarrow\infty$, a.a.s. there
is no isolated node in the network. This result coincides with the
result in \cite{Gupta98Critical} on the critical transmission range
required for an a.a.s. connected network. Another implication of our
result is that different channel models appear to play little role
in determining the \emph{asymptotic distribution} of isolated nodes
(hence the connectivity) so long as they achieve the same average
node degree under the same node density. 

This paper focuses on a necessary condition for $\mathcal{G}\left(\mathcal{X}_{\rho},g_{\rho}\right)$
to be a.a.s. connected. We expect that as $\rho\rightarrow\infty$,
the necessary condition also becomes sufficient, i.e. the network
becomes connected when the last isolated node disappears. It is part
of our future work to investigate the sufficient condition for asymptotically
connected networks under the random connection model and validate
the above conjecture. 

This paper focuses on the asymptotic distribution of the number of
isolated nodes, i.e. the number of nodes with a node degree $k=0$.
We conjecture that for a generic $k$, the asymptotic distribution
of the number of nodes with degree $k$ may also converge to a Poisson
distribution. Thus, it is another direction of our future work to
examine the asymptotic distribution of the number of nodes with degree
$k$, where $k>0$.

\section*{Appendix: Proof of Theorem \ref{thm:Poison convergence of isolated nodes}\label{app:Proof-of-Theorem}}

In this appendix, we give a proof of Theorem \ref{thm:Poison convergence of isolated nodes}
using the Chen-Stein bound in Theorem \ref{thm:Chen-Stein Bound}.
The key idea involved using Theorem \ref{thm:Chen-Stein Bound} to
prove Theorem \ref{thm:Poison convergence of isolated nodes} is constructing
a neighborhood of a node, i.e. $\Gamma_{s,i}$ in Theorem \ref{thm:Chen-Stein Bound},
such that a) the size of the neighborhood becomes vanishingly small
compared with $A$ as $\rho\rightarrow\infty$. This is required for
the $b_{1}$ and $b_{2}$ terms to approach $0$ as $\rho\rightarrow\infty$;
b) a.a.s. the neighborhood contains all nodes that may have a direct
connection with the node. This is required for the $b_{3}$ term to
approach $0$ as $\rho\rightarrow\infty$. Such a neighborhood is
defined in the next paragraph.

First note that parameter $W$ in Theorem \ref{thm:Chen-Stein Bound}
has the same meaning of $W_{m}^{T}$ defined in Section \ref{sec:Isolated nodes torus}.
Therefore the parameter $\lambda$ in Theorem \ref{thm:Chen-Stein Bound},
which depends on both $\rho$ and $m$, satisfies $\lim_{\rho\rightarrow\infty}\lim_{m\rightarrow\infty}\lambda=e^{-b}$.
Further $p_{i}\triangleq E\left(I_{i}^{T}\right)$ and $E\left(I_{i}^{T}\right)$
has been given in \eqref{eq:prob isolated node}. Unless otherwise
specified, these parameters, e.g. $\boldsymbol{x}_{i}$, $m$, $I_{i}^{T}$,
$\Gamma$ and $r_{\rho}$, have the same meaning as those defined
in Section \ref{sec:Isolated nodes torus}. Denote by $D\left(\boldsymbol{x}_{i},r\right)$
a disk centered at $\boldsymbol{x}_{i}$ and with a radius $r$. Further
define the neighbourhood of an index $i\in\Gamma$ as $\Gamma_{s,i}\triangleq\left\{ j:\boldsymbol{x}_{j}\in D\left(\boldsymbol{x}_{i},2r_{\rho}^{1-\epsilon}\right)\right\} \backslash\{i\}$
and define the non-neighbourhood of the index $i$ as $\Gamma_{w,i}\triangleq\left\{ j:\boldsymbol{x}_{j}\notin D\left(\boldsymbol{x}_{i},2r_{\rho}^{1-\epsilon}\right)\right\} $
where $\epsilon$ is a constant and $\epsilon\in\left(0,\frac{1}{2}\right)$.
It can be shown that

\begin{equation}
\left|\Gamma_{s,i}\right|=m^{2}4\pi r_{\rho}^{2-2\epsilon}+o_{m}\left(m^{2}4\pi r_{\rho}^{2-2\epsilon}\right)\label{eq:tao s, i}\end{equation}

From \eqref{eq:prob isolated node}, $p_{i}=E\left(I_{i}^{T}\right)$
and \eqref{eq:Expected number of isolated nodes asymptotic}, it follows
that\begin{eqnarray}
\lim_{m\rightarrow\infty}m^{2}p_{i} & = & \rho e^{-\int_{A}\rho g\left(\frac{\left\Vert \boldsymbol{x}-\boldsymbol{x}_{i}\right\Vert ^{T}}{r_{\rho}}\right)d\boldsymbol{x}}\label{eq:value of m2p_i finite rho}\\
\lim_{\rho\rightarrow\infty}\lim_{m\rightarrow\infty}m^{2}p_{i} & = & e^{-b}\label{eq:limiting value m2p_i}\end{eqnarray}

Next we shall evaluate the $b_{1}$, $b_{2}$ and $b_{3}$ terms separately.

\subsection{An Evaluation of the $b_{1}$ Term\label{sub:An-Evaluation-of-b1}}

It can be shown that\begin{eqnarray}
 &  & \lim_{m\rightarrow\infty}\sum_{i\in\Gamma}\left(p_{i}^{2}+p_{i}E\left(\sum_{j\in\Gamma_{s,i}}I_{j}^{T}\right)\right)\nonumber \\
 & = & \lim_{m\rightarrow\infty}m^{2}p_{i}E\left(\sum_{j\in\Gamma_{s,i}\cup\left\{ i\right\} }I_{j}^{T}\right)\nonumber \\
 & = & \lim_{m\rightarrow\infty}\left(m^{2}p_{i}\right)^{2}4\pi r_{\rho}^{2-2\epsilon}\nonumber \\
 & = & 4\pi\left(\rho e^{-\int_{A}\rho g\left(\frac{\left\Vert \boldsymbol{x}-\boldsymbol{x}_{i}\right\Vert ^{T}}{r_{\rho}}\right)d\boldsymbol{x}}\right)^{2}\left(\frac{\log\rho+b}{C\rho}\right)^{1-\epsilon}\label{eq:evaluation-b1-finite-rho}\end{eqnarray}
where in the second step, \eqref{eq:tao s, i} is used and in the
final step \eqref{eq:prob isolated node}, \eqref{eq:value of m2p_i finite rho}
and the value of $r_{\rho}$ are used. It follows that\[
\lim_{\rho\rightarrow\infty}\textrm{RHS of}\;\eqref{eq:evaluation-b1-finite-rho}=4\pi e^{-2b}\lim_{\rho\rightarrow\infty}\left(\frac{\log\rho+b}{C\rho}\right)^{1-\epsilon}=0\]
where \eqref{eq:expected number of isolated nodes finite network}
and \eqref{eq:Expected number of isolated nodes asymptotic} are used
in the above equation, and RHS is short for the right hand side. This
leads to the conclusion that $\lim_{\rho\rightarrow\infty}\lim_{m\rightarrow\infty}b_{1}=0$.

\subsection{An Evaluation of the $b_{2}$ Term\label{sub:An-Evaluation-of-b2}}

For the $b_{2}$ term, we observe that

\begin{eqnarray}
 &  & \lim_{m\rightarrow\infty}\sum_{i\in\Gamma}E\left(I_{i}^{T}\sum_{j\in\Gamma_{s,i}}I_{j}^{T}\right)\nonumber \\
 & = & \lim_{m\rightarrow\infty}\frac{\rho^{2}}{m^{2}}\sum_{j\in\Gamma_{s,i}}\left\{ \left(1-g\left\Vert \frac{\boldsymbol{x}_{i}-\boldsymbol{x}_{j}}{r_{\rho}}\right\Vert ^{T}\right)\right.\nonumber \\
 & \times & \exp\left[-\int_{A}\rho\left(g\left(\left\Vert \frac{\boldsymbol{x}-\boldsymbol{x}_{i}}{r_{\rho}}\right\Vert ^{T}\right)+g\left(\left\Vert \frac{\boldsymbol{x}-\boldsymbol{x}_{j}}{r_{\rho}}\right\Vert ^{T}\right)\right.\right.\nonumber \\
 &  & \left.\left.\left.-g\left(\left\Vert \frac{\boldsymbol{x}-\boldsymbol{x}_{i}}{r_{\rho}}\right\Vert ^{T}\right)g\left(\left\Vert \frac{\boldsymbol{x}-\boldsymbol{x}_{j}}{r_{\rho}}\right\Vert ^{T}\right)\right)d\boldsymbol{x}\right]\right\} \nonumber \\
 & = & \rho^{2}\int_{D\left(\boldsymbol{x}_{i},2r_{\rho}^{1-\epsilon}\right)}\left\{ \left(1-g\left(\frac{\left\Vert \boldsymbol{x}_{i}-\boldsymbol{y}\right\Vert ^{T}}{r_{\rho}}\right)\right)\right.\nonumber \\
 & \times & \exp\left[-\int_{A}\rho\left(g\left(\left\Vert \frac{\boldsymbol{x}-\boldsymbol{x}_{i}}{r_{\rho}}\right\Vert ^{T}\right)+g\left(\left\Vert \frac{\boldsymbol{x}-\boldsymbol{y}}{r_{\rho}}\right\Vert ^{T}\right)\right.\right.\nonumber \\
 &  & \left.\left.\left.-g\left(\left\Vert \frac{\boldsymbol{x}-\boldsymbol{x}_{i}}{r_{\rho}}\right\Vert ^{T}\right)g\left(\left\Vert \frac{\boldsymbol{x}-\boldsymbol{y}}{r_{\rho}}\right\Vert ^{T}\right)\right)d\boldsymbol{x}\right]\right\} d\boldsymbol{y}\nonumber \\
 & = & \rho^{2}r_{\rho}^{2}\int_{D\left(\boldsymbol{0},2r_{\rho}^{-\epsilon}\right)}\left\{ \left(1-g\left(\left\Vert \boldsymbol{y}\right\Vert ^{T}\right)\right)\right.\nonumber \\
 & \times & \exp\left[-\rho r_{\rho}^{2}\int_{A_{\rho}}\left(g\left(\left\Vert \boldsymbol{x}\right\Vert ^{T}\right)+g\left(\left\Vert \boldsymbol{x}-\boldsymbol{y}\right\Vert ^{T}\right)\right.\right.\nonumber \\
 &  & \left.\left.\left.-g\left(\left\Vert \boldsymbol{x}\right\Vert ^{T}\right)g\left(\left\Vert \boldsymbol{x}-\boldsymbol{y}\right\Vert ^{T}\right)\right)d\boldsymbol{x}\right]\right\} d\boldsymbol{y}\label{eq:Value-of-b2-finite-rho}\end{eqnarray}
where $A_{\rho}=\left[-\frac{1}{2r_{\rho}},\frac{1}{2r_{\rho}}\right)^{2}$,
in the first step, \eqref{eq:Prob of having a node in S_i}, \eqref{eq:joint distribution of isolated node events}
and \eqref{eq:Joint distribution of I_i and I_j} are used and the
final step involves some translation and scaling operations. Let $\lambda\triangleq\frac{\log\rho+b}{C}$,
it can be further shown that as $\rho\rightarrow\infty$, 

\begin{eqnarray}
 &  & e^{2b}\lim_{\rho\rightarrow\infty}\textrm{RHS of}\;\eqref{eq:Value-of-b2-finite-rho}\nonumber \\
 & \leq & \lim_{\rho\rightarrow\infty}\frac{\lambda}{\rho}\int_{D\left(\boldsymbol{0},2r_{\rho}^{-\epsilon}\right)}e^{\lambda\int_{\Re^{2}}g\left(\left\Vert \boldsymbol{x}\right\Vert ^{T}\right)g\left(\left\Vert \boldsymbol{x}-\boldsymbol{y}\right\Vert ^{T}\right)d\boldsymbol{x}}d\boldsymbol{y}\nonumber \\
 & = & \lim_{\rho\rightarrow\infty}\frac{\log\rho}{C\rho}\int_{D\left(\boldsymbol{0},2r_{\rho}^{-\epsilon}\right)}e^{\lambda\int_{\Re^{2}}h\left(\boldsymbol{x},\boldsymbol{y}\right)d\boldsymbol{x}}d\boldsymbol{y}\nonumber \\
 & = & \lim_{\rho\rightarrow\infty}\frac{1}{C\rho}\left\{ \int_{D\left(\boldsymbol{0},2r_{\rho}^{-\epsilon}\right)}e^{\lambda\int_{\Re^{2}}h\left(\boldsymbol{x},\boldsymbol{y}\right)d\boldsymbol{x}}d\boldsymbol{y}\right.\nonumber \\
 & + & \frac{\log\rho\left(\log\rho+b-1\right)}{C\rho}4\pi\epsilon r_{\rho}^{-2\epsilon-2}\nonumber \\
 & \times & e^{\frac{\log\rho+b}{C}\int_{\Re^{2}}g\left(\left\Vert \boldsymbol{x}\right\Vert ^{T}\right)g\left(\left\Vert \boldsymbol{x}-2r_{\rho}^{-\epsilon}\boldsymbol{u}\right\Vert ^{T}\right)d\boldsymbol{x}}\nonumber \\
 & + & \int_{D\left(\boldsymbol{0},2r_{\rho}^{-\epsilon}\right)}\left[e^{\frac{\log\rho+b}{C}\int_{\Re^{2}}h\left(\boldsymbol{x},\boldsymbol{y}\right)d\boldsymbol{x}}\right.\nonumber \\
 & \times & \left.\left.\frac{\log\rho\int_{\Re^{2}}h\left(\boldsymbol{x},\boldsymbol{y}\right)d\boldsymbol{x}}{C}\right]d\boldsymbol{y}\right\} \label{eq:b2 term intermediate step 2}\end{eqnarray}
where $\boldsymbol{u}$ is a unit vector pointing to the $+x$ direction
and $h\left(\boldsymbol{x},\boldsymbol{y}\right)=g\left(\left\Vert \boldsymbol{x}\right\Vert ^{T}\right)g\left(\left\Vert \boldsymbol{x}-\boldsymbol{u}\left\Vert \boldsymbol{y}\right\Vert ^{T}\right\Vert ^{T}\right)$,
in the first step \eqref{eq:expected number of isolated nodes finite network},
\eqref{eq:Expected number of isolated nodes asymptotic}, $r_{\rho}=\sqrt{\frac{\log\rho+b}{C\rho}}$
and $1-g\left(\left\Vert \boldsymbol{y}\right\Vert ^{T}\right)\leq1$
are used, and in the last step, L'H�pital's rule, where $C\rho$ is
used as the denominator and the other terms are used as the numerator,
\eqref{eq:property of toroidal distance 1} and the following formulas
are used: \begin{eqnarray*}
 &  & \frac{d}{dx}\int_{0}^{h(x)}f\left(x,y\right)dy\\
 & = & \int_{0}^{h\left(x\right)}\frac{\partial f\left(x,y\right)}{\partial x}dy+f\left(x,h\left(x\right)\right)\frac{dh\left(x\right)}{dx}\end{eqnarray*}
\[
\frac{d}{d\rho}\left(r_{\rho}^{-2\epsilon}\right)=\epsilon r_{\rho}^{-2\epsilon-2}\frac{\log\rho+b-1}{C\rho^{2}}\]

In the following we show that all three terms inside the $\lim_{\rho\rightarrow\infty}$
sign and separated by $+$ sign in $ $ \eqref{eq:b2 term intermediate step 2}
approach $0$ as $\rho\rightarrow\infty$. First it can be shown that\begin{eqnarray}
 &  & \int_{\Re^{2}}g\left(\left\Vert \boldsymbol{x}\right\Vert \right)g\left(\left\Vert \boldsymbol{x}-\boldsymbol{u}2r_{\rho}^{-\epsilon}\right\Vert \right)d\boldsymbol{x}\nonumber \\
 & = & \int_{D\left(\boldsymbol{0},r_{\rho}^{-\epsilon}\right)}g\left(\left\Vert \boldsymbol{x}\right\Vert \right)g\left(\left\Vert \boldsymbol{x}-\boldsymbol{u}2r_{\rho}^{-\epsilon}\right\Vert \right)d\boldsymbol{x}\nonumber \\
 & + & \int_{\Re^{2}\backslash D\left(\boldsymbol{0},r_{\rho}^{-\epsilon}\right)}g\left(\left\Vert \boldsymbol{x}\right\Vert \right)g\left(\left\Vert \boldsymbol{x}-\boldsymbol{u}2r_{\rho}^{-\epsilon}\right\Vert \right)d\boldsymbol{x}\nonumber \\
 & \leq & \int_{D\left(\boldsymbol{0},r_{\rho}^{-\epsilon}\right)}g\left(\left\Vert \boldsymbol{x}\right\Vert \right)g\left(r_{\rho}^{-\epsilon}\right)d\boldsymbol{x}\nonumber \\
 & + & \int_{\Re^{2}\backslash D\left(\boldsymbol{0},r_{\rho}^{-\epsilon}\right)}g\left(r_{\rho}^{-\epsilon}\right)g\left(\left\Vert \boldsymbol{x}-\boldsymbol{u}2r_{\rho}^{-\epsilon}\right\Vert \right)d\boldsymbol{x}\nonumber \\
 & \leq & 2Cg\left(r_{\rho}^{-\epsilon}\right)=o_{\rho}\left(r_{\rho}^{2\epsilon}\right)\label{eq:intermediate step in the first term in b2}\end{eqnarray}
where in the second step the observation that the distance between
any point in $D\left(\boldsymbol{0},r_{\rho}^{-\epsilon}\right)$
and $\boldsymbol{u}2r_{\rho}^{-\epsilon}$ is larger than or equal
to $r_{\rho}^{-\epsilon}$, the observation that the distance between
any point in $\Re^{2}\backslash D\left(\boldsymbol{0},r_{\rho}^{-\epsilon}\right)$
and the origin is larger than or equal to $r_{\rho}^{-\epsilon}$
and the non-increasing property of $g$ are used, \eqref{eq:scaling property of g(x)}
is used in the last step. This readily leads to the result that the
first term in \eqref{eq:b2 term intermediate step 2} satisfies: \begin{eqnarray*}
 &  & \lim_{\rho\rightarrow\infty}\frac{1}{C\rho}\int_{D\left(\boldsymbol{0},2r_{\rho}^{-\epsilon}\right)}e^{\frac{\log\rho+b}{C}\int_{\Re^{2}}h\left(\boldsymbol{x},\boldsymbol{y}\right)d\boldsymbol{x}}d\boldsymbol{y}\\
 & = & \lim_{\rho\rightarrow\infty}\left[e^{\frac{\log\rho+b}{C}\int_{\Re^{2}}g\left(\left\Vert \boldsymbol{x}\right\Vert ^{T}\right)g\left(\left\Vert \boldsymbol{x}-\boldsymbol{u}2r_{\rho}^{-\epsilon}\right\Vert ^{T}\right)d\boldsymbol{x}}\right.\\
 & \times & 4\pi\epsilon r_{\rho}^{-2\epsilon-2}\frac{\log\rho+b-1}{C^{2}\rho^{2}}\\
 & + & \left.\int_{D\left(\boldsymbol{0},2r_{\rho}^{-\epsilon}\right)}e^{\frac{\log\rho+b}{C}\int_{\Re^{2}}h\left(\boldsymbol{x},\boldsymbol{y}\right)d\boldsymbol{x}}\frac{\int_{\Re^{2}}h\left(\boldsymbol{x},\boldsymbol{y}\right)d\boldsymbol{x}}{C^{2}\rho}d\boldsymbol{y}\right]\\
 & = & \lim_{\rho\rightarrow\infty}\left[4\pi\epsilon r_{\rho}^{-2\epsilon-2}\frac{\log\rho+b-1}{C^{2}\rho^{2}}\right.\\
 & + & \left.\int_{D\left(\boldsymbol{0},2r_{\rho}^{-\epsilon}\right)}e^{\frac{\log\rho+b}{C}\int_{\Re^{2}}h\left(\boldsymbol{x},\boldsymbol{y}\right)d\boldsymbol{x}}\frac{\int_{\Re^{2}}h\left(\boldsymbol{x},\boldsymbol{y}\right)d\boldsymbol{x}}{C^{2}\rho}d\boldsymbol{y}\right]\\
 & = & \lim_{\rho\rightarrow\infty}\left[\int_{D\left(\boldsymbol{0},2r_{\rho}^{-\epsilon}\right)}e^{\frac{\log\rho+b}{C}\int_{\Re^{2}}h\left(\boldsymbol{x},\boldsymbol{y}\right)d\boldsymbol{x}}\frac{\int_{\Re^{2}}h\left(\boldsymbol{x},\boldsymbol{y}\right)d\boldsymbol{x}}{C^{2}\rho}d\boldsymbol{y}\right]\\
 & = & 0\end{eqnarray*}
where L'H�pital's rule, where $C\rho$ is used as the denominator
and the other terms are used as the numerator, and $r_{\rho}=\sqrt{\frac{\log\rho+b}{C\rho}}$
are used in the first step of the above equation, in the second step
\eqref{eq:intermediate step in the first term in b2} is used, which
readily leads to the conclusion that\begin{equation}
\lim_{\rho\rightarrow\infty}e^{\frac{\log\rho+b}{C}\int_{\Re^{2}}g\left(\left\Vert \boldsymbol{x}\right\Vert ^{T}\right)g\left(\left\Vert \boldsymbol{x}-\boldsymbol{u}2r_{\rho}^{-\epsilon}\right\Vert ^{T}\right)d\boldsymbol{x}}=1\label{eq:intermediate step 2 first term b2}\end{equation}
The final steps are complete by putting the value of $r_{\rho}$ into
the equation and noting that $\int_{\Re^{2}}h\left(\boldsymbol{x},\boldsymbol{y}\right)d\boldsymbol{x}<C$
for $\boldsymbol{y}\neq\boldsymbol{0}$, which is a consequence of
the following derivations:

\begin{eqnarray*}
 &  & \int_{\Re^{2}}g\left(\left\Vert \boldsymbol{x}\right\Vert ^{T}\right)g\left(\left\Vert \boldsymbol{x}-\boldsymbol{u}\left\Vert \boldsymbol{y}\right\Vert ^{T}\right\Vert ^{T}\right)d\boldsymbol{x}-C\\
 & = & \int_{\Re^{2}}g\left(\left\Vert \boldsymbol{x}\right\Vert ^{T}\right)\left(g\left(\left\Vert \boldsymbol{x}-\boldsymbol{u}\left\Vert \boldsymbol{y}\right\Vert ^{T}\right\Vert ^{T}\right)-1\right)d\boldsymbol{x}\leq0\end{eqnarray*}
and the only possibility for $\int_{\Re^{2}}h\left(\boldsymbol{x},\boldsymbol{y}\right)d\boldsymbol{x}-C=0$
to occur is when $g$ corresponds to a unit disk model \emph{and}
$\boldsymbol{y}=\boldsymbol{0}$.

For the second term in \eqref{eq:b2 term intermediate step 2}, it
can be shown that

\begin{eqnarray*}
 &  & \lim_{\rho\rightarrow\infty}\left[4\pi\epsilon r_{\rho}^{-2\epsilon-2}\frac{\log\rho\left(\log\rho+b-1\right)}{C^{2}\rho^{2}}\right.\\
 & \times & \left.e^{\frac{\log\rho+b}{C}\int_{\Re^{2}}g\left(\left\Vert \boldsymbol{x}\right\Vert ^{T}\right)g\left(\left\Vert \boldsymbol{x}-2\boldsymbol{u}r_{\rho}^{-\epsilon}\right\Vert ^{T}\right)d\boldsymbol{x}}\right]\\
 & = & \lim_{\rho\rightarrow\infty}4\pi\epsilon r_{\rho}^{-2\epsilon-2}\frac{\log\rho\left(\log\rho+b-1\right)}{C^{2}\rho^{2}}=0\end{eqnarray*}
where in the first step \eqref{eq:intermediate step 2 first term b2}
is used. 

For the third term in \eqref{eq:b2 term intermediate step 2}, it
is observed that

\begin{eqnarray*}
 &  & \lim_{\rho\rightarrow\infty}\frac{\log\rho}{C^{2}\rho}\int_{D\left(\boldsymbol{0},2r_{\rho}^{-\epsilon}\right)}\left[e^{\frac{\log\rho+b}{C}\int_{\Re^{2}}h\left(\boldsymbol{x},\boldsymbol{y}\right)d\boldsymbol{x}}\right.\\
 & \times & \left.\int_{\Re^{2}}h\left(\boldsymbol{x},\boldsymbol{y}\right)d\boldsymbol{x}\right]d\boldsymbol{y}\\
 & \leq & \lim_{\rho\rightarrow\infty}\frac{\log\rho}{C\rho}\int_{D\left(\boldsymbol{0},2r_{\rho}^{-\epsilon}\right)}e^{\frac{\log\rho+b}{C}\int_{\Re^{2}}h\left(\boldsymbol{x},\boldsymbol{y}\right)d\boldsymbol{x}}d\boldsymbol{y}=0\end{eqnarray*}
where $\int_{\Re^{2}}h\left(\boldsymbol{x},\boldsymbol{y}\right)d\boldsymbol{x}<C$
for $\boldsymbol{y}\neq\boldsymbol{0}$ is used in the first step.

Eventually we get $\lim_{\rho\rightarrow\infty}\lim_{m\rightarrow\infty}b_{2}=0$.

\subsection{An Evaluation of the $b_{3}$ Term\label{sub:An-Evaluation-of-b3}}

Denote by $\Gamma_{i}$ a random set of indices containing all indices
$j$ where $j\in\Gamma_{w,i}$ \emph{and }$I_{j}=1$, i.e. the node
in question is also isolated, and denote by $\gamma_{i}$ an instance
of $\Gamma_{i}$. Define $n\triangleq\left|\gamma_{i}\right|$. Following
a similar procedure that leads to \eqref{eq:ratio correlation} and
\eqref{eq:relation isolated node and non-isolated} and using the
result that $\int_{A}\rho g\left(\frac{\left\Vert \boldsymbol{x}-\boldsymbol{x}_{i}\right\Vert ^{T}}{r_{\rho}}\right)g\left(\frac{\left\Vert \boldsymbol{x}-\boldsymbol{x}_{j}\right\Vert ^{T}}{r_{\rho}}\right)d\boldsymbol{x}=o_{\rho}\left(1\right)$
and $g\left(\frac{\left\Vert \boldsymbol{x}_{i}-\boldsymbol{x}_{j}\right\Vert ^{T}}{r_{\rho}}\right)=o_{\rho}\left(1\right)$
for $\left\Vert \boldsymbol{x}_{i}-\boldsymbol{x}_{j}\right\Vert ^{T}\geq2r_{\rho}^{1-\varepsilon}$
(see \eqref{eq:intermediate step 2 first term b2}), it can be shown
that\begin{eqnarray}
 &  & \lim_{\rho\rightarrow\infty}\lim_{m\rightarrow\infty}\frac{E\left\{ I_{i}^{T}\left|\left(I_{j}^{T},j\in\Gamma_{w,i}\right)\right.\right\} }{\frac{\rho}{m^{2}}}\nonumber \\
 & = & \lim_{\rho\rightarrow\infty}E\left[e^{-\int_{A}\rho g\left(\frac{\left\Vert \boldsymbol{x}-\boldsymbol{x}_{i}\right\Vert ^{T}}{r_{\rho}}\right)\prod_{j\in\gamma_{i}}\left(1-g\left(\frac{\left\Vert \boldsymbol{x}-\boldsymbol{x}_{j}\right\Vert ^{T}}{r_{\rho}}\right)\right)d\boldsymbol{x}}\right.\nonumber \\
 & \times & \left.\prod_{j\in\gamma_{i}}\left(1-g\left(\frac{\left\Vert \boldsymbol{x}_{i}-\boldsymbol{x}_{j}\right\Vert ^{T}}{r_{\rho}}\right)\right)\right]\label{eq:Conditional value I_i b3}\end{eqnarray}
Note that $\boldsymbol{x}_{i}$ and $\boldsymbol{x}_{j},j\in\Gamma_{w,i}$
is separated by a distance not smaller than $2r_{\rho}^{-\epsilon}$.
A lower bound on the value inside the expectation operator in \eqref{eq:Conditional value I_i b3}
is given by\begin{eqnarray}
B_{L,i} & \triangleq & \left(1-g\left(2r_{\rho}^{-\epsilon}\right)\right)^{n}e^{-\int_{A}\rho g\left(\frac{\left\Vert \boldsymbol{x}-\boldsymbol{x}_{i}\right\Vert ^{T}}{r_{\rho}}\right)d\boldsymbol{x}}\label{eq:b3 term lower bound and m2p_i}\end{eqnarray}
 An upper bound on the value inside the expectation operator in \eqref{eq:Conditional value I_i b3}
is given by\begin{equation}
B_{U,i}\triangleq e^{-\int_{A}\rho g\left(\frac{\left\Vert \boldsymbol{x}-\boldsymbol{x}_{i}\right\Vert ^{T}}{r_{\rho}}\right)\prod_{j\in\gamma_{i}}\left(1-g\left(\frac{\left\Vert \boldsymbol{x}-\boldsymbol{x}_{j}\right\Vert ^{T}}{r_{\rho}}\right)\right)d\boldsymbol{x}}\label{eq:definition of B_{u,i}}\end{equation}
 It can be shown that \begin{equation}
B_{U,i}\geq\lim_{m\rightarrow\infty}\frac{mp_{i}^{2}}{\rho}\geq B_{L,i}\label{eq:relation upper bound on b3 and m2p_i}\end{equation}

Let us consider $E\left|E\left\{ I_{i}\left|\left(I_{j},j\in\Gamma_{w,i}\right)\right.\right\} -p_{i}\right|$
now. From \eqref{eq:Conditional value I_i b3}, \eqref{eq:b3 term lower bound and m2p_i},
\eqref{eq:definition of B_{u,i}} and \eqref{eq:relation upper bound on b3 and m2p_i},
it is clear that \begin{eqnarray}
 &  & \lim_{\rho\rightarrow\infty}\lim_{m\rightarrow\infty}\sum_{i\in\Gamma}E\left|E\left\{ I_{i}^{T}\left|\left(I_{j}^{T},j\in\Gamma_{w,i}\right)\right.\right\} -p_{i}\right|\nonumber \\
 & \in & \left[0,\;\max\left\{ \lim_{\rho\rightarrow\infty}\lim_{m\rightarrow\infty}m^{2}p_{i}-\rho E\left(B_{L,i}\right),\right.\right.\nonumber \\
 &  & \left.\left.\lim_{\rho\rightarrow\infty}\lim_{m\rightarrow\infty}\rho E\left(B_{U,i}\right)-m^{2}p_{i}\right\} \right]\label{eq:b3 a bound on the value}\end{eqnarray}

In the following we will show that both terms $\lim_{m\rightarrow\infty}m^{2}p_{i}-\rho E\left(B_{L,i}\right)$
and $\lim_{m\rightarrow\infty}\rho E\left(B_{U,i}\right)-m^{2}p_{i}$
in \eqref{eq:b3 a bound on the value} approach $0$ as $\rho\rightarrow\infty$.
First it can be shown that

\begin{eqnarray}
 &  & \lim_{m\rightarrow\infty}\rho E\left(B_{L,i}\right)\nonumber \\
 & \geq & \lim_{m\rightarrow\infty}\rho E\left(\left(1-ng\left(2r_{\rho}^{-\epsilon}\right)\right)e^{-\int_{A}\rho g\left(\frac{\left\Vert \boldsymbol{x}-\boldsymbol{x}_{i}\right\Vert ^{T}}{r_{\rho}}\right)d\boldsymbol{x}}\right)\nonumber \\
 & = & \lim_{m\rightarrow\infty}\rho\left(1-E\left(n\right)g\left(2r_{\rho}^{-\epsilon}\right)\right)e^{-\int_{A}\rho g\left(\frac{\left\Vert \boldsymbol{x}-\boldsymbol{x}_{i}\right\Vert ^{T}}{r_{\rho}}\right)d\boldsymbol{x}}\label{eq:b3 lower bound x rho finite rho}\end{eqnarray}
where $\lim_{m\rightarrow\infty}E\left(n\right)$ is the expected
number of isolated nodes in $A\backslash D\left(\boldsymbol{x}_{i},2r_{\rho}^{1-\epsilon}\right)$.
In the first step of the above equation, the inequality $\left(1-x\right)^{n}\geq1-nx$
for $0\leq x\leq1$ and $n\geq0$ is used. When $\rho\rightarrow\infty$,
$r_{\rho}^{1-\epsilon}\rightarrow0$ and $r_{\rho}^{-\epsilon}\rightarrow\infty$
therefore $\lim_{\rho\rightarrow\infty}\lim_{m\rightarrow\infty}E\left(n\right)=\lim_{\rho\rightarrow\infty}E\left(W\right)=e^{-b}$
is a bounded value and $\lim_{\rho\rightarrow\infty}\lim_{m\rightarrow\infty}g\left(2r_{\rho}^{-\epsilon}\right)\rightarrow0$,
which is an immediate outcome of \eqref{eq:scaling property of g(x)}
. It then follows that \begin{eqnarray*}
 &  & \lim_{\rho\rightarrow\infty}\lim_{m\rightarrow\infty}\frac{\rho E\left(B_{L,i}\right)}{m^{2}p_{i}}\\
 & \geq & \lim_{\rho\rightarrow\infty}\lim_{m\rightarrow\infty}\left(1-E\left(n\right)g\left(2r_{\rho}^{-\epsilon}\right)\right)=1\end{eqnarray*}
 Together with \eqref{eq:limiting value m2p_i} and\eqref{eq:relation upper bound on b3 and m2p_i},
it follows that \begin{equation}
\lim_{\rho\rightarrow\infty}\lim_{m\rightarrow\infty}m^{2}p_{i}-\rho E\left(B_{L,i}\right)=0\label{eq:lower bound on b3 equals 0}\end{equation}
Now let us consider the second term $\lim_{m\rightarrow\infty}\rho E\left(B_{U,i}\right)-m^{2}p_{i}$,
it can be observed that

\begin{eqnarray*}
 &  & \lim_{m\rightarrow\infty}E\left(B_{U,i}\right)\\
 & \leq & E\left[e^{-\int_{D\left(\boldsymbol{x}_{i},r_{\rho}^{1-\epsilon}\right)}\left(\rho g\left(\frac{\left\Vert \boldsymbol{x}-\boldsymbol{x}_{i}\right\Vert ^{T}}{r_{\rho}}\right)\right.}\right.\\
 &  & \left.^{\left.\prod_{j\in\gamma_{i}}\left(1-g\left(\frac{\left\Vert \boldsymbol{x}-\boldsymbol{x}_{j}\right\Vert ^{T}}{r_{\rho}}\right)\right)\right)d\boldsymbol{x}}\right]\end{eqnarray*}

\begin{eqnarray}
 & \leq & \lim_{m\rightarrow\infty}E\left[e^{-\int_{D\left(\boldsymbol{x}_{i},r_{\rho}^{1-\epsilon}\right)}\left(\rho g\left(\frac{\left\Vert \boldsymbol{x}-\boldsymbol{x}_{i}\right\Vert ^{T}}{r_{\rho}}\right)\right.}\right.\nonumber \\
 &  & \left.^{\left.\prod_{j\in\gamma_{i}}\left(1-g\left(\frac{r_{\rho}^{1-\epsilon}}{r_{\rho}}\right)\right)\right)d\boldsymbol{x}}\right]\nonumber \\
 & = & \lim_{m\rightarrow\infty}E\left(e^{-\left(1-g\left(r_{\rho}^{-\epsilon}\right)\right)^{n}\int_{D\left(\boldsymbol{x}_{i},r_{\rho}^{1-\epsilon}\right)}\rho g\left(\frac{\left\Vert \boldsymbol{x}-\boldsymbol{x}_{i}\right\Vert ^{T}}{r_{\rho}}\right)d\boldsymbol{x}}\right)\nonumber \\
 & \leq & \lim_{m\rightarrow\infty}E\left(e^{-\left(1-ng\left(r_{\rho}^{-\epsilon}\right)\right)\int_{D\left(\boldsymbol{x}_{i},r_{\rho}^{1-\epsilon}\right)}\rho g\left(\frac{\left\Vert \boldsymbol{x}-\boldsymbol{x}_{i}\right\Vert ^{T}}{r_{\rho}}\right)d\boldsymbol{x}}\right)\nonumber \\
 & {}\label{eq:limiting value of the upper bound final step}\end{eqnarray}
where in the second step, the non-increasing property of $g$, and
the fact that $\boldsymbol{x}_{j}$ is located in $A\backslash D\left(\boldsymbol{x}_{i},2r_{\rho}^{1-\epsilon}\right)$
and $\boldsymbol{x}$ is located in $D\left(\boldsymbol{x}_{i},r_{\rho}^{1-\epsilon}\right)$,
therefore $\left\Vert \boldsymbol{x}-\boldsymbol{x}_{j}\right\Vert ^{T}\geq r_{\rho}^{1-\epsilon}$
is used. It can be further demonstrated, using similar steps that
result in \eqref{eq:expected number of isolated nodes finite network}
and \eqref{eq:Expected number of isolated nodes asymptotic}, that
the term $\int_{D\left(\boldsymbol{x}_{i},r_{\rho}^{1-\epsilon}\right)}\rho g\left(\frac{\left\Vert \boldsymbol{x}-\boldsymbol{x}_{i}\right\Vert }{r_{\rho}}\right)d\boldsymbol{x}$
in \eqref{eq:limiting value of the upper bound final step} have the
following property: \begin{eqnarray}
\eta\left(\varepsilon,\rho\right) & \triangleq & \int_{D\left(\boldsymbol{x}_{i},r_{\rho}^{1-\epsilon}\right)}\rho g\left(\frac{\left\Vert \boldsymbol{x}-\boldsymbol{x}_{i}\right\Vert ^{T}}{r_{\rho}}\right)d\boldsymbol{x}\nonumber \\
 & = & \rho r_{\rho}^{2}\int_{D\left(\frac{\boldsymbol{x}_{i}}{r_{\rho}},r_{\rho}^{-\epsilon}\right)}g\left(\left\Vert \boldsymbol{x}-\frac{\boldsymbol{x}_{i}}{r_{\rho}}\right\Vert ^{T}\right)d\boldsymbol{x}\nonumber \\
 & \leq & C\rho r_{\rho}^{2}=\log\rho+b\label{eq:b3 upper bound the integral}\end{eqnarray}
For the other term $ng\left(r_{\rho}^{-\epsilon}\right)$ in \eqref{eq:limiting value of the upper bound final step},
choosing a positive constant $\delta<2\epsilon$ and using Markov's
inequality, it can be shown that\[
Pr\left(n\geq r_{\rho}^{-\delta}\right)\leqslant r_{\rho}^{\delta}E\left(n\right)\]
\begin{eqnarray*}
 &  & \lim_{\rho\rightarrow\infty}\lim_{m\rightarrow\infty}Pr\left(ng\left(r_{\rho}^{-\epsilon}\right)\eta\left(\varepsilon,\rho\right)\geq r_{\rho}^{-\delta}g\left(r_{\rho}^{-\epsilon}\right)\eta\left(\varepsilon,\rho\right)\right)\\
 & \leq & \lim_{\rho\rightarrow\infty}\lim_{m\rightarrow\infty}r_{\rho}^{\delta}E\left(n\right)\end{eqnarray*}
where $\lim_{\rho\rightarrow\infty}r_{\rho}^{-\delta}g\left(r_{\rho}^{-\epsilon}\right)\eta\left(\varepsilon,\rho\right)=0$
due to \eqref{eq:scaling property of g(x)}, \eqref{eq:b3 upper bound the integral}
and $\delta<2\epsilon$, $\lim_{\rho\rightarrow\infty}r_{\rho}^{B}=0$
for any positive constant $B$, and $\lim_{\rho\rightarrow\infty}\lim_{m\rightarrow\infty}r_{\rho}^{\delta}E\left(n\right)=0$
due to that $\lim_{\rho\rightarrow\infty}\lim_{m\rightarrow\infty}E\left(n\right)=\lim_{\rho\rightarrow\infty}E\left(W\right)=e^{-b}$
is a bounded value and that $\lim_{\rho\rightarrow\infty}r_{\rho}^{\delta}=0$.
Therefore\begin{equation}
\lim_{\rho\rightarrow\infty}\lim_{m\rightarrow\infty}Pr\left(ng\left(r_{\rho}^{-\epsilon}\right)\eta\left(\varepsilon,\rho\right)=0\right)=1\label{eq:asymptotic almost surely expected n}\end{equation}

As a result of \eqref{eq:limiting value of the upper bound final step},
\eqref{eq:asymptotic almost surely expected n}, \eqref{eq:b3 upper bound the integral},
\eqref{eq:expected number of isolated nodes finite network} and \eqref{eq:Expected number of isolated nodes asymptotic}:
\begin{eqnarray*}
 &  & \lim_{\rho\rightarrow\infty}\lim_{m\rightarrow\infty}\rho E\left(B_{U,i}\right)\\
 & \leq & \lim_{\rho\rightarrow\infty}\lim_{m\rightarrow\infty}\rho E\left(e^{-\int_{D\left(\boldsymbol{x}_{i},r_{\rho}^{1-\epsilon}\right)}\rho g\left(\frac{\left\Vert \boldsymbol{x}-\boldsymbol{x}_{i}\right\Vert ^{T}}{r_{\rho}}\right)d\boldsymbol{x}}\right)\\
 & = & \lim_{\rho\rightarrow\infty}\rho e^{-C\rho r_{\rho}^{2}}=e^{-b}\end{eqnarray*}
Using the above equation, \eqref{eq:limiting value m2p_i} and\eqref{eq:relation upper bound on b3 and m2p_i},
it can be shown that \begin{equation}
\lim_{\rho\rightarrow\infty}\lim_{m\rightarrow\infty}\rho E\left(B_{U,i}\right)-m^{2}p_{i}=0\label{eq:upper bound on b3 equals 0}\end{equation}

As a result of \eqref{eq:b3 a bound on the value}, \eqref{eq:lower bound on b3 equals 0}
and \eqref{eq:upper bound on b3 equals 0}, $\lim_{\rho\rightarrow\infty}\lim_{m\rightarrow\infty}b_{3}=0$.

A combination of the analysis in subsections \ref{sub:An-Evaluation-of-b1},
\ref{sub:An-Evaluation-of-b2} and \ref{sub:An-Evaluation-of-b3}
completes this proof.



\begin{thebibliography}{10}
\providecommand{\url}[1]{#1}
\csname url@samestyle\endcsname
\providecommand{\newblock}{\relax}
\providecommand{\bibinfo}[2]{#2}
\providecommand{\BIBentrySTDinterwordspacing}{\spaceskip=0pt\relax}
\providecommand{\BIBentryALTinterwordstretchfactor}{4}
\providecommand{\BIBentryALTinterwordspacing}{\spaceskip=\fontdimen2\font plus
\BIBentryALTinterwordstretchfactor\fontdimen3\font minus
  \fontdimen4\font\relax}
\providecommand{\BIBforeignlanguage}[2]{{%
\expandafter\ifx\csname l@#1\endcsname\relax
\typeout{** WARNING: IEEEtran.bst: No hyphenation pattern has been}%
\typeout{** loaded for the language `#1'. Using the pattern for}%
\typeout{** the default language instead.}%
\else
\language=\csname l@#1\endcsname
\fi
#2}}
\providecommand{\BIBdecl}{\relax}
\BIBdecl

\bibitem{Gupta98Critical}
P.~Gupta and P.~R. Kumar, ``Critical power for asymptotic connectivity in
  wireless networks,'' in \emph{Stochastic Analysis, Control, Optimization and
  Applications}.\hskip 1em plus 0.5em minus 0.4em\relax Boston, MA: Birkhauser,
  1998, pp. 547--566.

\bibitem{Xue04The}
F.~Xue and P.~Kumar, ``The number of neighbors needed for connectivity of
  wireless networks,'' \emph{Wireless Networks}, vol.~10, no.~2, pp. 169--181,
  2004.

\bibitem{Bettstetter04On}
C.~Bettstetter, ``On the connectivity of ad hoc networks,'' \emph{The Computer
  Journal}, vol.~47, no.~4, pp. 432--447, 2004.

\bibitem{Bettstetter02On}
------, ``On the minimum node degree and connectivity of a wireless multihop
  network,'' in \emph{3rd ACM International Symposium on Mobile Ad Hoc
  Networking and Computing}, 2002, pp. 80--91.

\bibitem{Hekmat06Connectivity}
R.~Hekmat and P.~V. Mieghem, ``Connectivity in wireless ad-hoc networks with a
  log-normal radio model,'' \emph{Mobile Networks and Applications}, vol.~11,
  no.~3, pp. 351--360, 2006.

\bibitem{Franceschetti07Random}
M.~Franceschetti and R.~Meester, \emph{Random Networks for
  Communication}.\hskip 1em plus 0.5em minus 0.4em\relax Cambridge University
  Press, 2007.

\bibitem{Meester96Continuum}
R.~Meester and R.~Roy, \emph{Continuum Percolation}, ser. Cambridge Tracts in
  Mathematics.\hskip 1em plus 0.5em minus 0.4em\relax Cambridge University
  Press, 1996.

\bibitem{Arratia90Poisson}
R.~Arratia, L.~Goldstein, and L.~Gordon, ``Poisson approximation and the
  chen-stein method,'' \emph{Statistical Science}, vol.~5, no.~4, pp. 403--434,
  1990.

\bibitem{Barbour03Poisson}
A.~D. Barbour, L.~Holst, and S.~Jason, \emph{Poisson Approximation}.\hskip 1em
  plus 0.5em minus 0.4em\relax Oxford University Press, New York, 2003.

\bibitem{Penrose03Random}
M.~D. Penrose, \emph{Random Geometric Graphs}, ser. Oxford Studies in
  Probability.\hskip 1em plus 0.5em minus 0.4em\relax Oxford University Press,
  USA, 2003.

\bibitem{Stein72A}
C.~M. Stein, ``A bound for the error in the normal approximation to the
  distribution of a sum of dependent random variables,'' in \emph{Proc. Sixth
  Berkeley Symp. Math. Statist. Probab}, vol.~2, 1972, pp. 583--602.

\bibitem{Chen75An}
L.~H.~Y. Chen, ``An approximation theorem for sums of certain randomly selected
  indicators,'' \emph{Z. Wahrsch. Verw. Gebiete}, vol.~33, pp. 69--74, 1975.

\bibitem{Chen75Poisson}
------, ``Poisson approximation for dependent trials,'' \emph{Annals of Applied
  Probability}, vol.~3, pp. 534--545, 1975.

\bibitem{Franceschetti06Critical}
M.~Franceschetti and R.~Meester, ``Critical node lifetimes in random networks
  via the chen-stein method,'' \emph{IEEE Transactions on Information Theory},
  vol.~52, no.~6, pp. 2831--2837, 2006.

\bibitem{Penrose99On}
M.~D. Penrose, ``On k-connectivity for a geometric random graph,'' \emph{Random
  Structures and Algorithms}, vol.~15, no.~2, pp. 145--164, 1999.

\bibitem{Philips89Connectivity}
T.~K. Philips, S.~S. Panwar, and A.~N. Tantawi, ``Connectivity properties of a
  packet radio network model,'' \emph{IEEE Transactions on Information Theory},
  vol.~35, no.~5, pp. 1044--1047, 1989.

\bibitem{Ravelomanana04Extremal}
V.~Ravelomanana, ``Extremal properties of three-dimensional sensor networks
  with applications,'' \emph{IEEE Transactions on Mobile Computing}, vol.~3,
  no.~3, pp. 246--257, 2004.

\bibitem{Balister05Connectivity}
P.~Balister, B.~Bollobs, A.~Sarkar, and M.~Walters, ``Connectivity of random
  k-nearest-neighbour graphs,'' \emph{Advances in Applied Probability},
  vol.~37, no.~1, pp. 1--24, 2005.

\bibitem{Wan04Asymptotic}
P.-J. Wan and C.-W. Yi, ``Asymptotic critical transmission radius and critical
  neighbor number for k-connectivity in wireless ad hoc networks,'' 2004.

\bibitem{Balister09A}
P.~Balister, B.~Bollobs, A.~Sarkar, and M.~Walters, ``A critical constant for
  the k nearest neighbour model,'' \emph{Advances in Applied Probability},
  vol.~41, no.~1, pp. 1--12, 2009.

\bibitem{Tang03An}
A.~Tang, C.~Florens, and S.~H. Low, ``An empirical study on the connectivity of
  ad hoc networks,'' in \emph{IEEE Aerospace Conference}, vol.~3, 2003, pp.
  1333--1338.

\bibitem{Orriss03Probability}
J.~Orriss and S.~K. Barton, ``Probability distributions for the number of radio
  transceivers which can communicate with one another,'' \emph{EEE Transactions
  on Communications}, vol.~51, no.~4, pp. 676--681, 2003.

\bibitem{Miorandi05Coverage}
D.~Miorandi and E.~Altman, ``Coverage and connectivity of ad hoc networks
  presence of channel randomness,'' in \emph{IEEE INFOCOM}, vol.~1, 2005, pp.
  491--502.

\bibitem{Miorandi08The}
D.~Miorandi, ``The impact of channel randomness on coverage and connectivity of
  ad hoc and sensor networks,'' \emph{IEEE Transactions on Wireless
  Communications}, vol.~7, no.~3, pp. 1062--1072, 2008.

\bibitem{Bettstetter04failure}
C.~Bettstetter, ``Failure-resilient ad hoc and sensor networks in a shadow
  fading environment,'' in \emph{IEEE/IFIP International Conference on
  Dependable Systems and Networks}, 2004.

\bibitem{Bettstetter05Connectivity}
C.~Bettstetter and C.~Hartmann, ``Connectivity of wireless multihop networks in
  a shadow fading environment,'' \emph{Wireless Networks}, vol.~11, no.~5, pp.
  571--579, 2005.

\bibitem{Yi06Asymptotic}
C.-W. Yi, P.-J. Wan, X.-Y. Li, and O.~Frieder, ``Asymptotic distribution of the
  number of isolated nodes in wireless ad hoc networks with bernoulli nodes,''
  \emph{IEEE Transactions on Communications}, vol.~54, no.~3, pp. 510--517,
  2006.

\end{thebibliography}
\end{document}